\newcommand{\EDGE}[2]{(#1,#2)}
\title{Canadian Traveller Problems in Temporal Graphs}
\author{Thomas Bellitto}{Sorbonne Université, CNRS, LIP6, F-75005 Paris, France}{thomas.bellitto@lip6.fr}{https://orcid.org/0009-0001-5424-0742}{}
\author{Johanne Cohen}{CNRS-LISN, Université Paris-Saclay, France}
{Johanne.Cohen@universite-paris-saclay.fr}{https://orcid.org/0000-0002-9548-5260}{}
\author{Bruno Escoffier}{Sorbonne Université, CNRS, LIP6, F-75005 Paris, France \and Institut Universitaire de France}{bruno.escoffier@lip6.fr}{https://orcid.org/0000-0002-6477-8706}{}
\author{Minh-Hang Nguyen}{Université Paris Cité, CNRS, IRIF, F-75013, Paris, France}{minh-hang.nguyen@irif.fr}{https://0009-0008-2391-029X}{}
\author{Mikaël Rabie}{Université Paris Cité, CNRS, IRIF, F-75013, Paris, France}{mikael.rabie@irif.fr}{https://orcid.org/0000-0001-6782-7625}{}
\authorrunning{T. Bellitto et al.} %TODO mandatory. First: Use abbreviated first/middle names. Second (only in severe cases): Use first author plus 'et al.'
\keywords{Canadian traveller problem, temporal graphs} %TODO mandatory; please add comma-separated list of keywords
\begin{document}

\maketitle              % typeset the header of the contribution

\begin{abstract}
We focus on the Canadian Traveller Problem, where a traveller aims to travel from $s$ to $t$ on a network  with the minimum cost, 
with up to $k$ edges potentially blocked. These edges remain hidden  until the traveller visits one of their endpoints. 

We have investigated this problem in temporal graphs where edges are  accessible only at specific times. We explore Three classical variants of the shortest path problem:  The \emph{Latest departure path}, the  \emph{Earliest arrival path}, \emph{and the Shortest duration path}.

%This problem has yet to be explored in temporal graphs, where three classical variants of the shortest path problem can be considered:  The Latest departure path, the Earliest arrival path, and the Shortest duration path.

%This paper formalises this problem as a positional game in two-player graphs. First, we highlight the distinction between the temporal and non-temporal versions of the problem. Then, we focus on providing efficient exact algorithms for directed acyclic graphs from source to destination using dynamic programming. Additionally, we explore the complexity of finding a winning strategy for the traveller, proven to be PSPACE-complete. \footnote{Do we want to highlight in the abstract that we prove that in the general case, it is $W[1]$ on $k$?}

This paper formalises the Canadian Traveller problem as a positional two-player game on graphs. We consider two variants depending on whether an edge is blocked. In the \emph{locally-informed} variant, the traveller learns if an edge is blocked upon reaching one of its endpoints,  while in the \emph{uninformed} variant, they discover this only when the edge is supposed to appear. We provide a polynomial algorithm for each shortest path variant in the uninformed case. This algorithm also solves the case of directed acyclic non-temporal graphs. 

In the locally-informed case, we prove that finding a winning strategy is PSPACE-complete. Moreover, we establish that the problem is polynomial-time solvable when $k=1$ but NP-hard for $k\geq 2$. 

Additionally, we show that the standard (non-temporal) Canadian Traveller Problem is NP-hard when there are $k\geq 4$ blocked edges, which is, to the best of our knowledge, the first hardness result for CTP for a constant number of blocked edges. 

\keywords{Shortest Path  \and Temporal graphs \and Canadian traveller problem.}
\end{abstract}

\newpage

\section{Introduction}

\subsection{Motivation}

  A Canadian traveller wants to travel from one city to another but may discover that a road they intended to take is blocked (by snowfalls, for example). They must plan robust itineraries to account for possible road closures, ensuring backup routes are efficient.  In this paper, we extend this problem to temporal graphs,  where edges between vertices exist only at specific timeframes, such as a train network. We study several variants depending on how the problem is transposed to temporal graphs. Upon arriving at a train station, it is sometimes known in advance which trains are missing, providing early information to adapt plans accordingly. At other times, it is learned at the last minute that the planned train has been cancelled.

  Temporal graphs introduce complexities beyond static graphs in algorithmic problems. For example, finding a minimum set of edges that connects all the vertices of a graph is an easy problem known as a Minimum Spanning Tree in the static case. Still, temporal spanners raise many very challenging problems (see~\cite{Spanners2,Spanners1} for example). Since the Canadian Traveller Problem is already PSPACE-complete, studying it in temporal graphs could seem hopeless. However, one of the reasons why the Canadian Traveller Problem is so difficult is that as the traveller moves in the graph, they have to remember which roads are blocked and which ones are not to choose their next move. We refer the reader to Section~\ref{subsection:preliminaries} for a more detailed presentation of the Canadian Traveller Problem and an illustration of why remembering which edges are blocked is important. On the other hand, in a temporal graph, edges appear and disappear at every timeframe, and once an edge is gone, it does not matter anymore whether it was usable or not in the first place.  The previous remark suggests that specific variants of our problem are solvable more efficiently in temporal graphs than in static graphs.

\subsection{Related Work}

Papadimitriou and Yannakakis introduced the Canadian Traveller Problem (CTP) in 1991~\cite{papadimitriou1991shortest}. Their paper aims to find a path that minimises the sum of the weight of its edges. The challenge is that the traveller initially only knows an interval of possible values for its weight for each edge. They learn the exact weight of an edge when they reach an adjacent neighbour. If an edge is heavier than the traveller hoped, they can still choose to switch to another route. The objective is to design a strategy minimising the worst possible ratio between the chosen and optimal routes. Blocking edges can be simulated by assigning them arbitrarily large weights.

From a complexity perspective, Papadimitriou and Yannakakis~\cite{papadimitriou1991shortest} showed that determining a strategy ensuring a specified competitive ratio is PSPACE-complete when the number of potentially blocked edges is not fixed. Bar-Noy and Schieber~\cite{bar1991canadian} then studied a problem closer to ours, where the weight of the edges is known in advance, but edges can be blocked. They formulated the problem with a bounded number of blocked edges as a game between a traveller and an adversary who can block edges during the game. They show that the problem is PSPACE-complete when the number of blocked edges $k$ is a parameter and that the problem is polynomial for fixed $k$. We refer the reader to Appendix~\ref{app:difference} for discussing the relation between~\cite{bar1991canadian} and our article.

Despite the negative results, the CTP is also studied under stochastic settings~\cite{nikolova2008route,polychronopoulos1996stochastic} when edge  weights    are treated as random variables with specified distributions. In particular,  Nikolova and Karger~\cite{nikolova2008route} address  this case on directed acyclic and path-disjoint graphs, providing a polynomial algorithm to optimise the expected travel time.

The strategy for a traveller in CTP can be viewed as an \emph{online algorithm} because the traveller lacks prior knowledge of blocked edges and discovers them them upon reaching their endpoints. In classical competitive analysis, many studies explicitly consider the impact of the number $k$ of blocked edges on the analysis. In 2009, Westphal~\cite{westphal2008note} showed that no deterministic online algorithm can achieve a competitive ratio smaller than $2k+1$. They  also provided a simple online algorithm to reach this lower bound. Moreover, Westphal~\cite{westphal2008note}  showed that no randomised online algorithm can
no randomized online algorithm can achieve better than  $(k + 1)$-competitive ratio. Bender and Westphal~\cite{bender2015optimal} showed that the bound is tight by constructing a randomised online algorithm for this case. Demaine, Huang, Liao and Sadakane~\cite{demaine2021approximating} introduced a randomised algorithm with polynomial time complexity, that  outperforms the best deterministic algorithms when there are at least two
blockages, and surpasses the lower bound of $2k + 1$ by an $o(1$) factor.

Dynamic networks have been considered under several definitions and names. In particular, Temporal Graphs are graphs coupled with a discrete (often finite) time. At each time, a subset of the edges is present (sometimes, with some travel time). They serve as models for real-world systems such as transportation networks~\cite{holme2012temporal}. The notion of paths and connectivity have been extensively studied; see, for example~\cite{awerbuch1984efficient,DBLP:journals/jcss/KempeKK02}.  Questions typically revolve around the earliest arrival time, latest departure time, or finding shortest paths in terms of travel time or distance (number of edges).
%From a source to a target, the usual questions are: What is the earliest time we can arrive, when can we leave at the latest, or what is the shortest path, either in terms of travel time or distance (\textit{i.e.}, number of edges).
In~\cite{DBLP:conf/stacs/FuchsleMNR22}, the authors consider robust itineraries in the context of possible delays on edges. A stochastic method given a distribution on the possible delays has been studied in~\cite{DBLP:conf/atmos/DibbeltSW14}.

\section{Definitions and Contributions}

\subsection{Preliminaries}\label{subsection:preliminaries}

Let us first define the traditional Canadian Traveller Problem and Temporal Graphs before bringing the definitions of our problems:

\begin{definition}\label{def:staticctp}(Static Canadian Traveller Problem (CTP))
Given a graph $G$, a multiplicity function on the edges $mult: E\to \mathbb{N}^*$, a distance function on the edges $d: E\to \mathbb{N}^*$, two specific vertices $s$ and $t$, an integer $k$, a maximal travel time $T$ and two players named \emph{Traveller} and \emph{Blocker}, we consider the following game, where the two players play as follows:
\begin{itemize}
    \item (Blocker plays) If Traveller visits for the first time vertex $v$, then Blocker decides (and tells Traveller) which edges incident to $v$ are blocked and which are not. Each edge $e$ has $mult(e)$ copies, and Blocker can block up to $mult(e)$ of them. The graph's total number of blocked edges cannot exceed $k$. If Traveller has already visited $v$ before, then Blocker does not play. 
    \item (Traveller plays) Traveller, currently at $v$, chooses a non-blocked edge $(v,u)$ to follow, spending time $d(u,v)$.
\end{itemize}
Traveller is initially at $s$ at $0$  and wins the game if they reach $t$ with total travel time at most $T$.
\end{definition}

In this paper, we will omit $mult$ and explicitly declare how many copies of each edge appear. {We will refer to $d(e)$  as the distance or length of edge $e$}.
Following this definition, the status of an edge (blocked/non-blocked) remains unchanged:  Blocker decides if an edge is blocked or not the first time the Traveller reaches one of its endpoints, and  Blocker cannot change this status later.

We consider the example depicted in Figure~\ref{figintroCTP}. Under the assumption that Blocker can block up to two edges and that Traveller only learns it when they reach an adjacent vertex, {Traveller can go} from $s$ to $t$ with a budget of $17$. Indeed, right from the start, Traveller knows which edges $\EDGE{s}{a}$, $\EDGE{s}{b}$ and $\EDGE{s}{c}$ are blocked. If the edge $\EDGE{s}{b}$ or $\EDGE{s}{c}$  is blocked and $\EDGE{s}{a}$ is not, Traveller can safely go to $a$ and then go to $t$ directly or through $d$. If both $\EDGE{s}{a}$ and $\EDGE{s}{b}$ are blocked, the path $s\to c\to a\to t$ is available and within the budget. If no edge adjacent to $s$ is blocked, the shortest path from $s$ to $t$ is $s\to a\to t$, but trying to use it would be unwise since $\EDGE{a}{t}$ and 
$\EDGE{a}{d}$ could be blocked, and Traveller would have to go back to $s$ and cannot win from there with a remaining budget of $5$. The right move is actually to go to $b$. From there, if $\EDGE{b}{t}$  is available, Traveller can use it and win. Let us assume that Traveller is on vertex $b$ with a remaining budget of $13$ and that $\EDGE{b}{t}$  is blocked. If the second blocked edge is $\EDGE{e}{t}$, going to $e$ would lose the game as the final path of Traveller would be $s\to b\to e\to b\to s\to a\to t$, which costs $20$. However, if Traveller returns to $s$ and realises that $\EDGE{s}{a}$ is blocked, their final route would be $s\to b\to s\to b\to e\to t$, which costs $18$. The trick here is that even though Traveller did not want to use $\EDGE{s}{a}$ on their first move, they could still see if it was blocked and must remember it to choose their next move. If $\EDGE{s}{a}$ was blocked, they know no other edge is and can safely reach $t$ through $e$. On the other hand, if $\EDGE{s}{a}$ was usable, they should backtrack to $s$ and use $\EDGE{s}{a}$. They are now standing on $a$ with a remaining budget of $3$, and Blocker only has one edge left to block. In the worst case, Blocker blocks $\EDGE{a}{t}$, and Traveller's final route would be $s\to b\to s\to a\to d\to t$, whose total cost is $17$.

\begin{figure}[!h]
\centering
\includegraphics[]{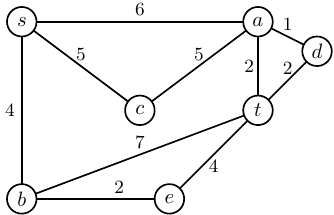}
\caption{An instance of the Canadian Traveller Problem with weighted edges.}
\label{figintroCTP}
\end{figure}

%We note that~\cite{bar1991canadian} seems to consider\footnote{The model is not precisely defined in the article.} a slightly different model where Blocker can first decide that an edge in non-blocked (the first time Traveller visits one of its endpoints) and then turns it into a blocked edge (if Traveller goes a second time to one of its endpoints).\\

Let us now define temporal graphs.

\begin{definition}(Temporal graph)
    In a temporal graph $G=(V, E)$, we are given a set $V$ of vertices and a set $E$ of time edges. A \emph{time edge} $(u,v,\tau,d)$ is an edge $\{u,v\}$ at time $
\tau\in \mathbb{N}$ taking $d\in\mathbb{N}\setminus\{0\}$ units of time to cross this edge. Note, as $G$ is undirected, that $(u,v,\tau,d)=(v,u,\tau,d)$ is a single edge that can be described from both directions. If $d=1$, we will omit it by default in the description (in particular, in Section~\ref{sec:linf}).
\end{definition}
{For a time edge $(u,v,\tau,d)$, $d$ corresponds to the travel time of this edge and will also be referred to as the length of the time edge.} In the {following}, we will {often} say edge instead of time edge when considering a temporal graph.
%\begin{definition}(Temporal walk)
In a temporal graph $G=(V,E)$, a \emph{temporal walk} is a sequence $(v_0,e_0,v_1,e_1,\dots,e_{p-1},v_p)$ with $v_i\in V$ and $e_i=(v_i,v_{i+1},\tau_i,d_i)\in E$, such that $\tau_i+d_i\leq \tau_{i+1}$. %\end{definition} 
In such a walk, we arrive at $v_{i+1}$ at $\tau_{i}+d_i$ (using the edge from $v_i$) and leave $v_{i+1}$ at $\tau_{i+1}$. We {assume that an edge can appear} several times. In that case, we will specify how many copies we have in the graph.

\subsection{Temporal CTP}
In a temporal CTP, a traveller travels (i.e., follows temporal walks) through a temporal graph $G$. Some edges may be blocked. It is important to note that Blocker can block is time edges. {For example, while one time edge  $(u,v,\tau,d)$ might be blocked,  another time edge $(u,v,\tau', d')$ between the same vertices may remain unblocked.
This distinction is significant because} if Blocker could block all  time edges between two vertices {simultaneously}, the resulting problem would trivially be a generalisation of the static problem (which is the case of a temporal graph where the same edges are present at every time step). The complexity results follow immediately.

Traveller is {made} aware of a blocked edge upon reaching one of its endpoints. We define two variants of the problem based on when Traveller discovers a blocked edge:

\begin{itemize}
    \item \emph{Locally informed model:} 
   Traveller is {made} aware of all the blocked edges incident to $u$ when they {arrive} at node $u$  (including edges $(u,v,\tau,d)$ for $\tau$ greater than the \textcolor{blue}{arrival} time). 
    \item \emph{ Uninformed model}: Traveller is {made} aware of whether an edge $(u, v, \tau, d)$  is blocked only when located at node $u$ (or at node $v$) at time $\tau$.
\end{itemize} 

Let us now formally define these variants. As mentioned in the introduction, CTP can be {viewed} as a two-player game between a traveller, aiming to reach their destination, and a blocker aiming {to} block Traveller.

\begin{definition}(Locally informed game)
Given a temporal graph $G$, two specific vertices $s$ and $t$, an integer $k$, and two players named \emph{Traveller} and 
\emph{Blocker}, {the game proceeds as follows:} %we consider the following game, where the two players play as follows:
\begin{itemize}
    \item (Blocker plays) {When} Traveller visits for the first time vertex $v$, Blocker decides and tells Traveller {about} which edges incident to $v$ are blocked and which are not. The total number of blocked edges in the graph  cannot exceed $k$. If Traveller has {previously}  visited $v$, then Blocker does not play. 
    \item (Traveller plays) Traveller, currently at $v$ at time $\tau$, chooses a non-blocked edge $(v,u,\tau',d)$ to {traverse} (with $\tau'\geq \tau)$, i.e., they move from $v$ to $u$ at time $\tau'$, arriving at $u$ at time $\tau'+d$.
\end{itemize}
Traveller is initially at vertex $s$ at $0$ and  {aims to reach vertex $t$ to win}. %wins the game if they reach $t$.
\end{definition}

We note that, here also, the set of blocked/non-blocked edges does not evolve: once Blocker decides whether $(u,v,\tau,d)$ is blocked, they cannot change this decision later. 

Also, {it is important} to note that the problems of deciding if $t$ is reachable before a given deadline $T$ or 
{or determining if $t$ is reachable at all are easily reducible by simply removing all edges occurring after time $T$.}

%at all are easily reducible to each other by simply removing all edges after time $T$.

\begin{definition}(Uninformed game)
An \emph{uninformed game} is defined similarly to a locally informed {game with the following modification:}  Traveller needs to be at vertex $v$ at time $\tau$ for the Blocker to decide (and tell them) {whether} the edge $(v,u,\tau,d)$ is blocked or not.
%Given a temporal graph $G$, two specific vertices $s$ and $t$, an integer $k$, and two players named Traveller and Blocker, we consider the following game, where the player plays alternatively:
%\begin{itemize}
%    \item (Blocker plays) If Traveller is at vertex $v$ at time $\tau$, then Blocker must decide (and tells Traveller) which edges $(v,u,\tau,d)$ are blocked and which are not. The graph's total number of blocked edges cannot exceed $k$.
%    \item (Traveller plays) Traveller, currently at $v$ at time $\tau$, either chooses a non-blocked edge $(v,u,\tau,d)$ to follow, or stays at $v$ (till the next time $\tau'$ such that there is an edge $(v,u,\tau',d)$ in $G$, that Blocker may decide (at time $\tau'$) to block or not).
%\end{itemize}
%Traveller is initially at $s$ at $\tau=0$ and wins the game if they reach %$t$.
\end{definition}
Note that this version gives more power to Blocker (as they reveal less information to Traveller). We now introduce the different decision problems in those games:%As time strictly increases, Blocker player has possibility to cut edges at each step (as long as less than $k$ edges were cut).

\begin{definition}(Li-TCTP, U-TCTP)
The \emph{Locally informed Temporal Canadian Traveller} Problem (resp. \emph{Uninformed Temporal Canadian Traveller Problem}), Li-TCTP for short (resp. U-TCTP for short) asks whether there exists a winning strategy for Traveller in the corresponding Locally informed game (resp. Uninformed game).
\end{definition}

Problems Li-TCTP and  U-TCTP  are reachability problems. We can also formulate optimisation problems. Let us say that Traveller has a $(T_1,T_2)$-winning strategy if they can ensure reaching the destination $t$ at time $T_2$ or sooner, by departing no sooner than $T_1$; 
\begin{itemize}
    \item \emph{(Earliest Arrival)} We aim to find the smallest $T_2$ and a corresponding strategy such that Traveller has a $(0, T_2)$-winning strategy.
    \item \emph{(Latest Departure)} We aim to find the largest $T_1$ (and a corresponding strategy) such that Traveller has a $(T_1,\infty)$-winning strategy.
    \item \emph{(Shortest Path)} We want to find $(T_1, T_2)$ with the smallest difference $T_2-T_1$ and a corresponding strategy such that Traveller has a $(T_1, T_2)$-winning strategy.
\end{itemize}
Note that if one can determine whether Traveller has a $(T_1, T_2)$-winning strategy, they can solve all three optimisation problems.\\

%\textcolor{blue}{Mikaël: I do not know where to put that}

\subsection{Static versus Temporal Case.}
In the static version, reachability with $k$ blocked edges is possible if and only if  $s$ and $t$ are $k+1$-connected. If Blocker wins, it implies that the blocked edges necessarily form a cut between $s$ and $t$, and vice versa. However, this approach can no longer work in the temporal case. As  time augments, connectivity naturally decreases, even without cutting edges. It is not enough to ensure reachability from $s$ to $t$ with $k$ blocked edges when the underlying graph of the temporal graph is $k+1$ connected.

To show the difference between Locally Informed and Uninformed, let us consider the example in Figure~\ref{fig:example-simple}, where all edges have length~$1$. First, observe that if there are $(k + 1)$ parallel edges $(u, v,\tau,d )$ ($k$ being the maximal number of blocked edges), then at least one of them remains accessible, meaning
that Traveller can always rely on taking one of these edges. In the remaining,  we refer to an edge $(u,v,\tau,d)$ as a \emph{forced edge}   to mean that there are $(k+1)$ copies of it. Hence, Blocker cannot block a forced edge.

In U-TCTP (Uninformed case), it is easy to see that there is no winning strategy for Traveller: they go from $s$ to $v_0$. At time $1$, they must either wait at $v_0$ or go to $v_1$. In the first case, Blocker blocks edge $(v_0,v_2,2,1)$. In the other case, Blocker blocks the $k$ parallel edges $(v_1,t,2,1)$. 

In Li-TCTP (Locally informed case), there is a winning strategy: Traveller goes to $v_0$. There, they know %(when arriving at $v_0$, i.e., at $\tau=1$) 
if $(v_0,v_2,2,1)$ is blocked or not. If it is not, Traveller takes this edge and reaches $t$ (as $(v_2,t,3,1)$ is forced). If $(v_0,v_2,2,1)$ is blocked, Traveller goes to $v_1$, and then to $t$: Blocker cannot block the $k$ parallel edges $(v_1,t,2,1)$ as their remaining budget (corresponding to the remaining number of edges that can still be blocked) is $k-1$ (they have already blocked one edge).

{This example highlights how the timing of when Traveller learns about blocked edges significantly impacts their ability to strategize effectively in temporal graphs.}

\begin{figure}[H]
\begin{center}
\begin{tikzpicture}
\begin{scope}[every node/.style={circle,thick,draw}]
    \node (S) at (0,0) {$s$};
    \node (V0) at (2,0) {$v_0$};
    \node (V1) at (4,0) {$v_1$};
    \node (V2) at (2,1.5) {$v_2$};
    \node (t) at (4,1.5) {$t$} ;
\end{scope}

%% Forced edges
\begin{scope}[>={Stealth[black]},
              every node/.style={circle},
              every edge/.style={draw=black, dotted}]
    \draw [-,ultra thick,double] (S) to node[above,pos=0.5]{0}  (V0);
    \draw [-,ultra thick,double] (V0) to node[above,pos=0.5]{1}  (V1);
    \draw [-,ultra thick,double] (V2) to node[above,pos=0.5]{$3$}  (t);
\end{scope}

%% k copies
\begin{scope}[>={Stealth[black]},
              every node/.style={circle},
              every edge/.style={draw=black, thick}]
    \draw [-,ultra thick] (V1)    to 
node[right,pos=0.5]{2} (t);
\end{scope}

%% normal
\begin{scope}[>={Stealth[black]},
              every node/.style={circle},
              every edge/.style={draw=blue, thick}]
      \draw [-] (V0) to node[right,pos=0.50]{2} (V2);
\end{scope}

\end{tikzpicture}
\caption{{Example of a} temporal graph. Thick (resp. doubly thick) edges represent $k$ parallel (resp. $(k+1)$-parallel) edges. The number of the edge corresponds to the time step in which it is present. All edges have length one.}
  \label{fig:example-simple}
\end{center}
\end{figure}
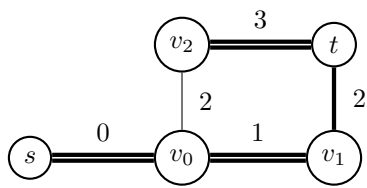

\subsection{Contributions}

We focus on the computational complexity of solving U-TCTP and Li-TCTP, highlighting a significant difference between these two problems. 

In Section~\ref{sec:uninf}, we  show that U-TCTP is solvable in time polynomial in $n=|V|$ and $k$. This result, valid also for the three optimisation versions, is obtained by reducing our problem to a CTP in static {\it directed acyclic} graphs, a problem that we show how to solve efficiently. 
{We leverage classical Dynamic Programming (DP) techniques to devise a polynomial-time algorithm for U-TCTP, despite its static version being computationally hard.}

%The solution of U-TCTP, obtained using classical Dynamic Programming (DP) techniques, is interesting as it provides a polytime algorithm for a temporal problem whose static version is hard. 

We then tackle Li-TCTP in Section~\ref{sec:linf}. We show that this problem (even in its reachability version) is PSPACE-hard. We then show how to solve the case $k=1$ in polynomial time (deferred to Appendix~\ref{app:k=1}). %We leave the complexity of Li-TCTP parameterised by $k$ as an open problem.
%We then tackle (Li-TCTP) in Section~\ref{sec:linf}. We show that this problem (even in its reachability version) is PSPACE-hard. We then show how to solve the case $k=1$ in polynomial time (deferred to Appendix~\ref. We leave the complexity of (Li-TCTP) parameterised by $k$ as an open problem.

%We also hint why solving the problem for larger $k$ is harder than considering all possible subsets of size at most $k$ of blocked edges as it was (wrongly) suggested in~\cite{bar1991canadian}. 

In Section~\ref{sec:static}, we prove that Li-TCTP is NP-hard for $k=2$. We also consider the static case (see Definition~\ref{def:staticctp}) and show that static CTP becomes NP-hard when there are $k=4$ blocked edges. To the best of our knowledge, this result is the first hardness result for (static) CTP for a constant  $k$ (we refer the reader to Appendix~\ref{app:difference} for a discussion with the results in~\cite{bar1991canadian}). 
We conjecture that static CTP is NP-hard already for  $k=2$ but leave this as an open question.

%\input{example1}

%\input{universal-example}
 
%M3) Traveller discover the missing edge $(v,u,t,d)$ at time $t$ regardless which node he stays at that time.

\section{Uninformed Temporal CTP}\label{sec:uninf}

\subsection{CTP for directed acyclic static graphs}

This section focuses on the Canadian Traveller Decision Problem in a directed acyclic static graph (DAG). CTP in acyclic graphs has already been studied in~\cite{nikolova2008route}. However, they consider probabilistic distribution for the weight on edges and optimise the expected time. Each edge can be affected in their work when our algorithm needs to ponder how many remaining edges can be blocked. Our approach addresses a deterministic scenario where the primary concern is the blocking of edges. Specifically, we show that this problem in DAGs can be solved in polynomial time even when up to kk edges can be blocked.

Given a DAG $G = (V, E)$, with a source node $s$ and a target node $t$, a traveller wants to go from $s$ to $t$ when Blocker can block at most $k$ arcs. Traveller can only discover the blocked arc when they reach the head of that blocked arc.

\begin{theorem}\label{lem:acyclic1}
    %The Canadian Traveller decision Problem in a weighted DAG $G$ can be solved in polynomial time in $n,m$ and $k$. 
    The Canadian Traveller Decision Problem in a weighted DAG $G$ can be solved in time $O(k^2(m+n))$. 
\end{theorem}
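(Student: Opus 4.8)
The plan is to solve the Canadian Traveller Decision Problem on a DAG by dynamic programming over the vertices in reverse topological order, maintaining for each vertex $v$ and each "remaining blocking budget" $j \in \{0,1,\dots,k\}$ the worst-case cost $C[v][j]$ that Traveller incurs to reach $t$ starting from $v$, given that Blocker may still block up to $j$ arcs among those Traveller has not yet probed. Since $G$ is acyclic, once Traveller leaves $v$ along an arc, they never return, so the information Traveller has gathered about arcs at already-visited vertices is irrelevant to the future — this is exactly the feature (noted in the introduction) that makes the problem tractable here. Consequently the state $(v,j)$ is a sufficient summary of the game position, and we get a genuine recursion rather than an exponential blow-up.

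\medskip

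\textbf{The recursion.} Fix $v$ and $j$. Let the out-arcs of $v$ in topological order be $e_1 = (v,u_1), \dots, e_p = (v,u_p)$ with lengths $d_i = d(e_i)$. When Traveller arrives at $v$, Blocker first reveals which out-arcs of $v$ are blocked; Blocker will block some subset $S$ of them, using $|S| \le j$ of its remaining budget, so that afterwards Traveller chooses the best surviving arc $e_i \notin S$ and continues with budget $j - |S|$, paying $d_i + C[u_i][j-|S|]$. Thus
\[
C[v][j] \;=\; \max_{0 \le b \le j}\ \min_{\substack{S \subseteq \{1,\dots,p\},\ |S| = b}}\ \min_{i \notin S}\ \bigl( d_i + C[u_i][j-b] \bigr).
\]
The inner $\min_S \min_{i\notin S}$ for a fixed $b$ is simply the $(b+1)$-st smallest value of $d_i + C[u_i][j-b]$ over $i$ (Blocker deletes the $b$ cheapest options, Traveller takes the next one); if $p \le b$, then $v$ is losing with budget $b$ and we set the value to $+\infty$. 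Base case: $C[t][j] = 0$ for all $j$, and $C[v][j] = +\infty$ if $v \ne t$ has no out-arcs. The answer to the decision problem is "yes" iff $C[s][k] \le T$ (and for the optimisation versions one reads off $C[s][k]$ directly).

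\medskip

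\textbf{Correctness and complexity.} Correctness follows by induction on the topological order (from $t$ backwards): the formula literally transcribes one round of the game at $v$, and the acyclicity guarantees no state is ever revisited, so $C[v][j]$ is exactly the value of the subgame rooted at $(v,j)$. For the running time: processing vertex $v$ with out-degree $p_v$ and a fixed budget level requires finding, for each $b \in \{0,\dots,j\}$, an order statistic of a $p_v$-element list of already-computed values $d_i + C[u_i][\cdot]$ — but note the list changes with $b$ only through which budget level $j-b$ we read at the children, so naively this is $O(k \cdot p_v)$ per $(v,j)$ and $O(k^2 p_v)$ over all $j$; summing over $v$ gives $O(k^2 m)$, plus $O(k^2 n)$ for the array itself and $O(m+n)$ for the topological sort, for a total of $O(k^2(m+n))$. (If one wanted, the order statistics can be obtained by a single sort of each child-list, but the crude bound already matches the claim.)

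\medskip

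\textbf{Main obstacle.} The only genuinely delicate point is arguing that the state $(v,j)$ really is a complete description of the game position — i.e., that Traveller gains nothing from remembering which arcs were blocked at previously visited vertices. This is where acyclicity is essential (and is precisely what fails in the static case, as the introduction's Figure~\ref{figintroCTP} illustrates): since no visited vertex is ever re-entered, past blocking information can never influence a future move, and the only relevant parameter carried forward is Blocker's residual budget $j$. Once this is pinned down, the rest is a routine DP argument; I would spend the bulk of the write-up making this reduction-to-a-sufficient-state claim precise, and then the recursion and the $O(k^2(m+n))$ accounting are immediate.
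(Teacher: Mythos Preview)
Your approach is essentially identical to the paper's: you compute the same table $C[v][j]$ (the paper calls it $\pi_j(v)$) by the same recursion in reverse topological order, and your running-time accounting matches. One slip to fix: in your displayed recursion the inner quantifier over $S$ should be a $\max$, not a $\min$ --- Blocker chooses $S$ adversarially, so the value for fixed $b$ is $\max_{|S|=b}\min_{i\notin S}(d_i + C[u_i][j-b])$, which is indeed the $(b{+}1)$-st smallest of the $d_i + C[u_i][j-b]$ as your prose correctly explains (with $\min_S$ the expression would collapse to the global minimum regardless of $b$).
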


Briefly, each node $v$ maintains a set of $k+1$ values, denoted as $\pi_0,\ldots,\pi_k$. Informally, $\pi_i$ represents the latest arrival time from which Traveller can guarantee to go from $v$ to $t$ under {the condition} when Blocker can block $i$ arcs between node $v$ and $t$.

{Since $G$ is a DAG, we compute a topological order on the vertices, of which we can assume that $s$ is the first one and $t$ is the last one. We proceed in reverse topological order.}

{We first consider the case of a node $v$ which has only $t$ as outneighbor. $\pi_i(v)$} is the maximum cost Traveller has to pay to reach $t$ {from $v$} when Blocker can block at most $i$ arcs connecting $v$ and $t$. Assume that there are $m$ arcs from $v$ to $t$, then $\pi_i$ is the $(i+1)$-th minimum weight of an arc from $v$ to $t$ since Blocker can block $i$ arcs with smallest weights. We denote by $\text{i-th} \min$ the $i$-th smallest value of a set. If $i\geq m$ then $\pi_i(v) = \infty$. Otherwise,
$$\pi_i(v) = \text{(i+1)-th} \min d(v,t) $$
{Otherwise, let $N^+(v)$ be a set of all out-neighbours of $v$.} The values, $\pi_0,\ldots, \pi_k$ for each node are computed inductively as follow:

$$ \pi_0(v) = \min_{u \in N^+(v)} \left(\pi_0(u)+d(v,u) \right)
$$

$$ \pi_i(v) = \max \left\{ \begin{array}{l}
    \displaystyle    \min_{u \in N^+(v)} \left(\pi_i(u)+d(v,u) \right)\\
     \displaystyle    \text{second} \min_{u \in N^+(v)} \left(\pi_{i-1}(u) +d(v,u) \right)\\
        \ldots\\
     \displaystyle    \text{i-th} \min_{u \in N^+(v)} \left(\pi_{1}(u)+d(v,u)\right)\\
    \displaystyle     \text{(i+1)-th} \min_{u \in N^+(v)} \left(\pi_{0}(u)+d(v,u)\right)\\
    \end{array} \right.
$$
For each $i$, values $\pi_i$ for all vertices can be computed in time $O(k(m+n))$. Therefore, values $\pi_i(v)$, for $i=0,\ldots,k$ and $v\in V$, can be computed in time $O(k^2(m+n))$.

We now consider the CTP for directed acyclic graphs a two-player game.

\begin{lemma}\label{lem:acyclic2}
For the Canadian Traveller Decision Problem in a weighted {DAG} $G$,   Traveller has a winning strategy to reach $t$ with arrival time at most $T$ if and only if $\pi_k(s) \leq T$. %\textcolor{blue}{Comment: I removed the sentence on Blocker, which was useless as we use if and only if}%Blocker has a winning strategy, i.e., a strategy to prevent Traveller from reaching $t$ within budget $T$, if $\pi_k(s) > T$.
\end{lemma}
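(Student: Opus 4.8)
The plan is to prove the equivalence by induction on the reverse topological order, showing the stronger statement that for every vertex $v$ and every $i\in\{0,\dots,k\}$, the value $\pi_i(v)$ equals the smallest time $T$ such that Traveller, starting at $v$ at time $T$, has a winning strategy to reach $t$ provided Blocker may block at most $i$ arcs among those encountered strictly after leaving $v$. The lemma then follows by taking $v=s$ and $i=k$. First I would fix the semantics precisely: since $G$ is a DAG and Traveller only discovers a blocked arc upon reaching its head, any play is a path that moves forward in the topological order, so the ``state'' of the game is fully captured by (current vertex, current time, remaining blocking budget), with no need to remember past blockages — this is exactly what makes the DP sound, and it should be stated as an observation before the induction.

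For the base case I would take the vertices $v$ whose only out-neighbour is $t$ (including $t$ itself, where $\pi_i(t)=0$ trivially). Here Traveller must pick one of the $m$ parallel arcs $v\to t$; Blocker, seeing Traveller arrive at $v$, can pre-emptively block up to $i$ of them, and his best move is to block the $i$ cheapest. So Traveller's guaranteed cost is the $(i+1)$-th smallest length, matching the formula $\pi_i(v)=\text{(i+1)-th}\min d(v,t)$ (and $\infty$ if $i\ge m$), and arrival time at most $T$ from departure time $T'$ is achievable iff $T'+\pi_i(v)\le T$.

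For the inductive step, fix $v$ with out-neighbour set $N^+(v)$, all of which come later in the topological order, so $\pi_j(u)$ is already known and (by induction) correct for every $u\in N^+(v)$ and every $j\le k$. I would argue the two inequalities separately. For ``$\le$'' (Traveller can win if $\pi_i(v)\le T$, departing at time $0$, say): Blocker first reveals which arcs out of $v$ are blocked; since he blocks at most $i$ of them, at least one of the sorted quantities $\pi_{i-r}(u)+d(v,u)$ survives for the appropriate $r$ — concretely, if Blocker blocks $r\le i$ arcs incident to $v$, then among the remaining arcs at least one $u$ has $\pi_{i-r}(u)+d(v,u)$ at most the $(r+1)$-th smallest such value over all $u$, which is at most $\max_r(\text{$(r+1)$-th}\min_u(\pi_{i-r}(u)+d(v,u)))=\pi_i(v)\le T$; Traveller follows that arc, arrives at $u$ at time $d(v,u)$ with budget $i-r$, and wins by the induction hypothesis. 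For ``$\ge$'' (Blocker wins if $\pi_i(v)>T$): let $r$ be the index achieving the max in the definition of $\pi_i(v)$, i.e.\ the $(r+1)$-th smallest value of $\pi_{i-r}(u)+d(v,u)$ exceeds $T$; Blocker blocks the $r$ arcs $v\to u$ with the smallest values of $\pi_{i-r}(u)+d(v,u)$ (he has enough budget since $r\le i$), so every arc $v\to u$ that Traveller can still take has $\pi_{i-r}(u)+d(v,u)>T$, hence Traveller arrives at some $u$ at time $d(v,u)$ with a remaining budget of at least $i-r$, and by the induction hypothesis Blocker wins from there. One has to be slightly careful that Blocker's remaining budget after the $v$-step is \emph{exactly} $i$ minus the number blocked at $v$, and that the arcs blocked at $v$ are disjoint from those counted later (true since a blocked arc $v\to u$ is never ``later'' than $v$); these bookkeeping points should be spelled out.

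The main obstacle I expect is making the ``$\max$ of $(r+1)$-th minima'' combinatorial argument airtight in both directions simultaneously — in particular verifying that Blocker's optimal response really is to block a \emph{prefix} (in sorted order of $\pi_{i-r}(u)+d(v,u)$, for the right $r$) of the arcs at $v$, and that no adaptive splitting of his budget between $v$-arcs and deeper arcs helps him, which is what licenses treating the two phases independently. I would isolate this as a small combinatorial lemma: for fixed nonnegative reals $a_u$ and budget $i$, $\min$ over Blocker's choice of a partition $i=r+(\text{rest})$ and a set $S$ of $r$ arcs to kill of the best surviving value equals $\max_{0\le r\le i}(\text{$(r+1)$-th}\min_u a_u^{(i-r)})$ where $a_u^{(j)}=\pi_j(u)+d(v,u)$ — once that is established the rest is a routine induction.
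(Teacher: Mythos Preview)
Your proposal is correct and follows essentially the same approach as the paper: induction on reverse topological order, proving for every $v$ and $i$ that $\pi_i(v)$ is the optimal guaranteed cost, with Traveller's strategy (take the surviving arc minimizing $\pi_{i-r}(u)+d(v,u)$ when $r$ arcs are blocked at $v$) and Blocker's strategy (choose $r$ attaining the max and block the $r$ cheapest arcs under $\pi_{i-r}(\cdot)+d(v,\cdot)$) exactly as in the paper. Your closing worry about isolating a separate minimax ``combinatorial lemma'' is unnecessary---the two one-sided inequalities you already sketched are the complete argument, and the paper proceeds with them directly.
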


The proof of this Lemma is in Appendix \ref{app:lemacyclic}.

%Theorem \ref{lem:acyclic1} is proved by Lemma \ref{lem:acyclic2}.

\subsection{CTP for Temporal graph in the uninformed case}

\begin{theorem}\label{th:U-TCTP}
    Uninformed-Temporal Canadian Traveller Problem can be solved in polynomial time.
\end{theorem}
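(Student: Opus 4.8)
The plan is to reduce Uninformed Temporal CTP (U-TCTP) to the Canadian Traveller Problem on a weighted DAG, which is solved in polynomial time by Theorem~\ref{lem:acyclic1} and Lemma~\ref{lem:acyclic2}. The key observation is that in the uninformed model, the information Traveller acquires about an edge $(u,v,\tau,d)$ is only revealed when Traveller is physically at $u$ or $v$ at the precise time $\tau$; in particular, knowledge of a blocked edge is useless after the edge has disappeared. This means Traveller's state is fully captured by the pair (current vertex, current time) together with the number of blockages already spent, exactly like a position in a static DAG augmented with a blockage counter --- which is precisely the setting of Lemma~\ref{lem:acyclic2}.

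Concretely, I would build a static directed graph $G'$ whose vertices are the \emph{time-vertices} $(v,\theta)$ for $v\in V$ and $\theta$ ranging over the (polynomially many) relevant timestamps --- arrival times $\tau+d$ and departure times $\tau$ of edges --- plus a source $s'=(s,0)$ and a sink $t'$. For every time edge $e=(u,v,\tau,d)$ of $G$ I add an arc from $(u,\tau)$ to $(v,\tau+d)$ and an arc from $(v,\tau)$ to $(u,\tau+d)$, each of weight $0$ (or $d$, if one wants to track durations; for the reachability/earliest-arrival version weight $0$ on these and encoding time in the vertex layer suffices, while for Latest Departure / Shortest Path one threads the departure time through as well). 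To model ``waiting'' I add, for each $v$, an arc from $(v,\theta)$ to $(v,\theta')$ where $\theta'$ is the next relevant timestamp at $v$, of weight $0$. Finally I connect every $(t,\theta)$ to $t'$ by a forced (non-blockable, i.e.\ $(k+1)$-fold or simply ``cannot be blocked'') arc of weight $0$. A time edge with $c$ copies in $G$ becomes $c$ parallel arcs in $G'$. This $G'$ is acyclic because every arc strictly increases the time coordinate (waiting arcs go to a strictly later timestamp), so a topological order exists.

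The correctness claim is a strategy-equivalence: Traveller has a $(0,T_2)$-winning strategy in the uninformed game on $G$ against a budget-$k$ Blocker if and only if Traveller has a winning strategy in the DAG-CTP game on $G'$ reaching $t'$ by ``time'' $T_2$ against a budget-$k$ Blocker, i.e.\ iff $\pi_k(s')\le T_2$ by Lemma~\ref{lem:acyclic2}. The forward direction simulates a $G$-strategy move-by-move; the crucial point is that in the DAG-CTP model Blocker reveals blocked arcs exactly when Traveller reaches the arc's \emph{head} --- but here I want revelation at the tail-time, which matches the uninformed rule ``Traveller learns about $(u,v,\tau,d)$ only when at $u$ (or $v$) at time $\tau$.'' So I must be slightly careful in the gadget: I should either orient the blockable arc $(u,\tau)\to(v,\tau+d)$ and accept head-revelation (which gives Traveller \emph{less} information than the uninformed rule and hence is safe in one direction but not the other), or --- better --- route each blockable time edge through an intermediate ``decision'' vertex $(e,\tau)$ reached from both $(u,\tau)$ and $(v,\tau)$ by forced arcs, so that the blockable arc's head is this decision vertex and Traveller learns the status upon arriving at $u$- or $v$-at-time-$\tau$, exactly as required. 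Then the games coincide position-for-position and the reduction is exact. For the Latest Departure and Shortest Path variants one runs the same construction with $s$ replaced by candidate start-times (polynomially many), or augments $G'$ with a layer encoding the not-yet-departed status; since there are only polynomially many relevant $(T_1,T_2)$ pairs, all three optimisation problems are solved in polynomial time.

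The main obstacle I anticipate is getting the information-timing in the gadget exactly right --- making the ``head'' of each blockable arc coincide with the moment the uninformed Blocker must commit, so that the DAG-CTP game on $G'$ is neither more nor less informative to Traveller than U-TCTP on $G$ --- and verifying that the decision-vertex trick does not accidentally let Blocker block an arc that should be forced, or let Traveller dodge a commitment by ``leaving early.'' Bounding the size of $G'$ (it is $O(|V|\cdot|E|)$ vertices and arcs, hence polynomial) and confirming acyclicity are routine once the time-coordinate monotonicity is observed.
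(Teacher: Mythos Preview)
Your approach is essentially the paper's: reduce U-TCTP to DAG-CTP via the time-expanded (static-expansion) graph, with waiting arcs between consecutive timestamps and $(k+1)$ parallel arcs from each $(t,\theta)$ to a super-sink, then invoke Theorem~\ref{lem:acyclic1}/Lemma~\ref{lem:acyclic2}. One remark: your worry about ``head-revelation'' is unfounded in this paper's DAG-CTP model---despite the wording, the recursion for $\pi_i(v)$ makes clear that Blocker commits to the status of arcs \emph{leaving} $v$ when Traveller arrives at $v$ (tail-revelation), which already matches the uninformed rule exactly, so the decision-vertex gadget is unnecessary (though harmless). Your instinct to make the waiting arcs unblockable is correct and worth stating explicitly; the paper is silent on this point.
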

\begin{proof}
Proving this theorem requires transforming the temporal CTP into a CTP on a static weighted DAG. Given a temporal graph ${G} = (V, E)$, we construct a weighted directed acyclic graph $G' = (V', E')$ corresponding to $G$. This construction is known as the static expansion of a temporal graph~\cite{mertzios2013temporal}.
%, where $E$ is a set of time edges of $G$, and $E'$ is a set of weighted edges of $G'$. 
%\textcolor{blue}{Bruno: We can remove what is after ``where'', no?}

For each edge $(v,u,\tau,d) \in E$, we create two vertices $(v,\tau)$ and $(u,\tau+d)$ in $V'$. We create one universal target node $\texttt{target} \in V'$. 
\begin{itemize}
    \item If $(v,u,\tau,d) \in E$, then $(v,\tau)(u,\tau+d) \in E'$, and $d((v,\tau),(u,\tau+d)) = d $. We call $(v,\tau)(u,\tau+d)$ the corresponding edge of $(v,u,\tau,d)$.
    \item For all $v \in V, (v,\tau) \in V'$, and the minimum $\tau'>\tau$ such that $(v,\tau') \in V'$, let {$((v,\tau),(v,\tau'))\in E'$ and} $d((v,\tau),(v,\tau')) = \tau' -\tau$.
    \item For every $(t,\tau) \in V'$, connect node $(t,\tau)$ to $\texttt{target}$ by $k+1$ parallel edges, let $d((t,\tau),\texttt{target}) = 0$.
    
\end{itemize}
Since $G$ is a temporal graph, $G'$ is a directed acyclic graph. In $G'$, source node is $(s,0)$ and target node is $\texttt{target}$.

%\begin{lemma}\label{lem:U-TCTPtoCTP}
    Then, there exists a winning strategy for Traveller in U-TCTP in $G$ if and only if there exists a winning strategy for Traveller in Canadian Traveller Decision Problem with reachable condition in directed acyclic graph $G'$. 
\end{proof}

Note that to solve U-TCTP for three optimisation versions - Latest departure, Early arrival, and Shortest time travelling - {we proceed as follows:} For each $T_1 \leq T_2 \leq T$, {where $T$ is the maximum value of $\tau$ over time edges $(u,v,\tau,d)$ (also known as the lifespan of the graph),} we consider a subset of time edges of $G$, which are presented between time $T_1$ and $T_2$. Then, we construct a corresponding directed acyclic graph $G'(T_1, T_2)$.

%---------------------------------------------------------------------------------------
\newcommand{\FF}{{L +1}}
\newcommand{\FIN}{{L+5 }}
\newcommand{\FINmoinsUN}{{L+4}}
\newcommand{\FINmoinsDEUX}{{L + 3}}
\newcommand{\FINmoinsTROIS}{{L +2}}
\newcommand{\FINmoinsQUATRE}{{\FF}}

\newcommand{\impairUN}{{7\lfloor \frac{i}2 \rfloor+1}}
\newcommand{\impairDEUX}{{7\lfloor \frac{i}2 \rfloor+2}}
\newcommand{\impairTROIS}{{7\lfloor \frac{i}2 \rfloor+3}}

  \newcommand{\pairUN}{{7\lfloor \frac{i-1}2 \rfloor+3}}
  \newcommand{\pairDEUX}{{7\lfloor \frac{i-1}2 \rfloor+4}}
  \newcommand{\pairTROIS}{{7\lfloor \frac{i-1}2 \rfloor+5}}
  \newcommand{\pairQUATRE}{{7\lfloor \frac{i-1}2 \rfloor+6}}
  \newcommand{\pairCINQ}{{7\lfloor \frac{i}2 \rfloor}}

\newcommand{\KK}{{2m-1}}
\newcommand{\K}{{k}}
\newcommand{\Kvalue}{{2m+\frac{n}2}}
\newcommand{\KmoinsUN}{{k-1}}
\newcommand{\KmoinsDEUX}{{k-2}}

\section{Locally Informed Temporal CTP}\label{sec:linf}
%\subsection{PSPACE-hardness} 
\begin{theorem}\label{th:pspace}
  Li-TCTP is PSPACE-complete, even in instances where {each edge has length $1$ and is present at only a one-time step. This complexity applies to both directed and undirected (temporal) graphs.}
\end{theorem}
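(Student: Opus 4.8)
\textbf{Proof plan for Theorem~\ref{th:pspace}.}

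The plan is to prove membership in PSPACE and PSPACE-hardness separately. For membership, I would observe that a configuration of the game is captured by: the current vertex of Traveller, the current time, the set of vertices already visited (so we know where Blocker has committed), the set of edges Blocker has declared blocked so far, and the count of remaining blocking budget. Since each edge is present at only one time step and the lifespan can be taken polynomial in the input size (by the remark following the Uninformed game definition, edges after a deadline can be dropped), the game tree has polynomial depth: each Traveller move strictly advances time along a time edge. A standard alternating/recursive evaluation of ``does Traveller have a winning strategy from this configuration'' then runs in polynomial space, because the recursion depth is polynomial and each frame stores only a polynomial-size configuration. This gives Li-TCTP $\in$ PSPACE.

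For hardness I would reduce from a known PSPACE-complete problem; the natural candidates are Quantified Boolean Formula (QBF) or, more conveniently, the PSPACE-completeness of the \emph{static} CTP with an unbounded number of blocked edges due to Bar-Noy and Schieber~\cite{bar1991canadian} (or the original competitive-ratio version of Papadimitriou and Yannakakis~\cite{papadimitriou1991shortest}). The cleanest route is probably to encode QBF directly: existential variables correspond to choices Traveller makes (which of two gadget edges to traverse), universal variables correspond to choices Blocker makes (which edge in a pair to block), and the clause-checking part of the formula is implemented by a final gadget that Traveller can cross iff the chosen assignment satisfies the formula. Temporal structure is exploited to \emph{force the order} in which variables are decided: variable gadgets are placed at increasing time steps along a backbone so that Traveller is compelled to pass through gadget $i$ before gadget $i+1$, matching the quantifier prefix $\exists x_1 \forall x_2 \exists x_3 \cdots$. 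The ``locally informed'' rule is what makes Blocker's universal choices meaningful: when Traveller enters a universal-variable gadget, Blocker reveals which of the two parallel branches is blocked, thereby ``setting'' that variable, and Traveller must react. Using forced edges (the $(k+1)$-parallel-edge trick introduced in the excerpt) I can make every edge that should not be blockable genuinely unblockable, so Blocker's budget is spent exactly on the universal variables plus the clause gadget.

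The key technical steps, in order, would be: (i) fix the target $T$ (equivalently, truncate the graph) so that ``reach $t$'' and ``reach $t$ by $T$'' coincide; (ii) build a time-layered backbone $s = u_0 \to u_1 \to \cdots \to u_n \to (\text{clause gadget}) \to t$ where layer $i$ sits at a distinct time window; (iii) for each existential variable insert a gadget offering Traveller two time edges (``true'' and ``false'' branch) that reconverge, with all internal edges forced; (iv) for each universal variable insert a gadget with two parallel branches of which Blocker must block exactly one (enforced by making Traveller's only escape route expensive enough that not blocking wastes budget Blocker cannot afford), again with forced reconvergence; (v) implement the clause gadget so that Traveller can traverse it within the remaining time budget iff the literal choices recorded along the backbone satisfy every clause, spending Blocker's last unit of budget on a ``trap'' edge that is harmless exactly when the formula is satisfied; (vi) argue the two directions of correctness — a winning Traveller strategy yields a winning QBF assignment and conversely — and check that the construction is polynomial. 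The main obstacle I anticipate is step (v) together with the bookkeeping of Blocker's budget across the whole construction: I must ensure Blocker is neither so constrained that it cannot ``play'' the universal variables faithfully nor so powerful that it can deviate (e.g., block something inside an existential gadget or in the backbone) to win spuriously. Making the budget accounting tight — $k$ equal to the number of universal variables plus a constant for the clause gadget, with every other edge forced — is the delicate part, and it is precisely why the number of blocked edges here is unbounded (so this construction says nothing about constant $k$, which is handled separately in Section~\ref{sec:static}).
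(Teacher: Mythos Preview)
Your plan is essentially the paper's: PSPACE membership via bounded-depth game-tree evaluation, and hardness via a direct QBF reduction with a time-layered backbone, existential diamonds where Traveller chooses, universal diamonds where Blocker must block one of two branches (enforced by escape routes through auxiliary vertices), all glued together with forced $(k{+}1)$-parallel edges. The one real divergence is your budget estimate and clause gadget. The paper sets $k = \frac{n}{2} + 2m$: Blocker spends $\frac{n}{2}$ on the universal gadgets, then one on a direct $(w,t)$ edge, then $2(m{-}1)$ to funnel Traveller to a single unsatisfied clause vertex, and a final one to block the edge from an \emph{unvisited} literal vertex to its sink $z_i$. Your ``constant for the clause gadget'' is not how the paper does it, though a sequential clause-walk in the style of the paper's later $k=2$ Li-TCTP construction could plausibly be grafted onto the universal gadgets to realise your estimate.

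The point your plan leaves implicit, and which is really the crux, is \emph{how} Traveller's choices are ``recorded along the backbone.'' It is not the backbone that records them; it is the locally-informed rule applied at the literal vertices $x_i,\overline{x_i}$ themselves. Each such vertex has a late-time edge to a sink $z_i$; when Traveller first visits $x_i$ (say), Blocker must commit to that edge, and blocking it early lets Traveller escape immediately, so Blocker leaves it open. Thus a visited literal vertex carries a guaranteed-open exit that Traveller can cash in later from the clause gadget, while an unvisited one does not. In other words, the locally-informed rule is doing double duty: it implements Blocker's universal moves \emph{and} it is the memory mechanism that makes the clause check meaningful. Your step~(v) will not go through until you build this in explicitly.
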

\begin{proof}
	This problem constitutes a two-player game {for which membership in PSPACE is easily shown using standard techniques (Lemma 2.2 of~\cite{schaefer1978complexity}).} 
 %for which it is easy to show membership in PSPACE using standard techniques (Lemma 2.2 of~\cite{schaefer1978complexity}).
 Now, we prove completeness. The reduction is from the quantified {3-}satisfiability problem {which is PSPACE-complete}~\cite{stockmeyer1973word}:

\begin{definition} (Quantified 3-Sat):
\noindent{\bf Given} Set $X = \{x_{1}, \dots,  x_{n}\}$ of $n$ variables ($n$ even) and a quantified Boolean formula
$$F = (\exists x_{1})(\forall x_{2})(\exists x_{3}) \dots (\forall  x_{n}) \phi(x_{1}, \dots,  x_{n})$$
where $\phi$ is a propositional formula over $X$ in 3-CNF (i.e., in conjunctive normal form with exactly three literals per clause).

 \noindent{\bf Question} : Is $F$ true?
\end{definition}

Let $C_{1},\dots ,C_{m}$ denote the clauses that make up $\phi$: 
$\phi(x_{1}, \dots,  x_{n}) =C_{1}\land C_{2} \dots \land  C_{m}$, where each $C_{i}$ , $1\leq i \leq m$.
 contains exactly three literals.

 Given an instance $\mathcal Q$ of Quantified 3-Sat, we construct an instance $\mathcal S$ of a Locally informed game by creating a temporal graph $G = (V, E)$ as follows. First, Blocker can block $k=\Kvalue$ %\textcolor{blue}{$2m+\frac{n}{2}$} 
 edges.

In the temporal graph $G$, all edges will have {length 1}. For clarity, we will omit to mention the {length} when referring to a temporal edge. {$G$ has:}
 \begin{itemize}
 \item a start vertex $s$,  a target vertex $t$,   and a vertex $w$
  \item vertices $v_{i}$, $x_{i}$, $\overline{x_i}$, $a_i$, and  $z_{i}$,  for each integer $i$, $1\leq i\leq n$,     (one per boolean variable) 
  %\item vertices $a_{i}$,  for each  integer $i$, $1\leq i\leq n+1$,
  \item vertices $b_{i}$,  for each  even integer $i$, $1\leq i\leq n+1$,
  \item vertex $c_{j}$,    for each integer $j$, $1\leq j\leq m$,   (one per clause) 
  \item {two} distinct vertices $v_{n+1}$, $a_{n+1}$.
\end{itemize}

Now, we describe how the vertices are connected {(see Figures~\ref{fig:PSPACE:variable} and~\ref{fig:PSPACE:clause} for illustrations)}. The set of edges can be divided into three parts. The first part corresponds to edges associated with variables of odd index, the second part to edges associated with variables of even index, and the third part corresponds to edges associated with clauses. {We define $L=7\frac n2$, a constant that we will often use, corresponding to the time at which Traveller can reach $v_{n+1}$.}

\begin{figure}[ht]
\begin{center}

\begin{tikzpicture}[scale=1]
\begin{scope}[every node/.style={circle,thick,draw}]
    \node (S) at (0,5) {$s$};
    \node (V0) at (1.5,5) {$v_1$};
    \node (A0) at (1.5,3) {$a_1$}; % AJOUT DE JOHANNE
    \node (X1) at (3,6) {$x_1$};
    \node (X1B) at (3,4) {$\overline{x_1}$};
    \node (z1) at (2.5,8) {$z_{1}$} ;  % AJOUT DE JOHANNE

    \node (V1) at (4.5,5) {$v_2$};
    \node (A1) at (4.5,3) {$a_2$}; % AJOUT DE JOHANNE

     \node (B2) at (4.5,7) {$b_2$}; % AJOUT DE JOHANNE
     \node (t4) at (6,1) {$t$} ;  % AJOUT DE JOHANNE
    \node (t1) at (6,10) {$t$} ;
 
    \node (X2) at (6,6) {$x_2$};
    \node (X2B) at (6,4) {$\overline{x_2}$};
    \node (V2) at (7.5,5) {$v_3$};  % AJOUT DE JOHANNE
     \node (A2) at (7.5,3) {$a_3$}; % AJOUT DE JOHANNE
     \node (z2) at (7,8) {$z_{2}$} ;  % AJOUT DE JOHANNE

    \node (Xn) at (10,6) {$x_n$};
    \node (XnB) at (10,4) {$\overline{x_n}$};
    \node (zn) at (9,8) {$z_{n}$} ;  % AJOUT DE JOHANNE

    \node (Vn) at (11.5,5) {$v_{n+1}$};   
     \node (An) at (11.5,3) {$a_{n+1}$}; % AJOUT DE JOHANNE
\end{scope}

    \node (3p1) at (9.2,5) {$\cdots$};
   % \node (3p1) at (5,1.5) {$\cdots$};
    \node (fake1) at (6.75,5.5) {$ $};
    \node (fake2) at (6.75,4.5) {$ $};
    \node (fake3) at (9.25,5.5) {$ $};
    \node (fake4) at (9.25,4.25) {$ $};
    \node (fake5) at (2.75,2.8) {$ $};
    \node (fake6) at (3.75,2.6) {$ $};
    \node (fake7) at (4.75,2.4) {$ $};

%% Forced edges
\begin{scope}[>={Stealth[black]},
              every node/.style={circle},
              every edge/.style={draw=black, dotted}]
    \draw [-,ultra thick,double] (S) to node[above,pos=0.50]{0}  (V0);
    \draw [-,ultra thick,double] (V0) to node[above,pos=0.50]{1}  (X1);
   
    \draw [-,ultra thick,double] (V0)  to node[above,pos=0.50]{1} (X1B);
    \draw [-,ultra thick,double] (X1) to node[above,pos=0.50]{2}   (V1);
    \draw [-,ultra thick,double] (X1B) to node[above,pos=0.50]{2}  (V1);
   \draw [-,ultra thick,double] (X2) to node[above,pos=0.50]{7}  (V2);
   \draw [-,ultra thick,double] (X2B)  to node[above,pos=0.50]{7}   (V2);
    \draw [-,ultra thick,double] (Xn) to  node[above,sloped,pos=0.50]{\tiny $7\frac{n}2$} (Vn);
    \draw [-,ultra thick,double] (XnB) to   node[above,sloped,pos=0.50]{\tiny $7\frac{n}2$}  (Vn);
 
    \draw [-,ultra thick,double] (X2) to node[above,pos=0.50]{4} (B2);    % AJOUT DE JOHANNE
    \draw [-,ultra thick,double] (B2) to node[left,pos=0.250]{5}  (V1);    % AJOUT DE JOHANNE
    \draw [-,ultra thick,double] (V0)  to node[left,pos=0.50]{1}  (A0); % AJOUT DE JOHANNE
    \draw [-,ultra thick,double] (V1)  to node[left,pos=0.50]{3}  (A1); % AJOUT DE JOHANNE
    \draw [-,ultra thick,double] (V2)  to node[left,pos=0.50]{8}   (A2); % AJOUT DE JOHANNE
    \draw [-,ultra thick,double] (Vn) to  (An); % AJOUT DE JOHANNE

\end{scope}

%% k copies
\begin{scope}
%    \draw [-,ultra thick] (w1) to  (w3);
 %   \draw [-,ultra thick] (w3) to  (t2);
    \draw [-,  thick,dashed] (t1) to node[above,sloped,pos=0.50]{\tiny $(2m,L+5)$}   (z1);
    \draw [-,  thick,dashed] (z2) to node[above,sloped,pos=0.40]{\tiny $(2m,L+5)$}  (t1);
     \draw [-,  thick,dashed] (zn) to node[above,sloped,pos=0.60]{\tiny $(2m,L+5)$}  (t1);
  
     \draw [-, thick,dashed] (A0) to node[left,sloped,above,pos=0.50]{\tiny $(k,L+5)$}    (t4);
     \draw [-, thick,dashed] (A1) to node[left,sloped,above,pos=0.50]{\tiny $(k,L+5)$ }  (t4);
     \draw [-, thick,dashed] (A2) to node[left,sloped,above,pos=0.50]{\tiny $(k-1,L+5)$ }   (t4);
     \draw [-,thick,dashed] (An) to node[left,sloped,above,pos=0.50]{\tiny $(k-\frac{n}{2},L+5)$ }   (t4);

\end{scope}

%% 2 copies
\begin{scope}[>={Stealth[black]},
              every node/.style={circle},
              every edge/.style={draw=blue, thick}]
    \draw [-,thick,dashed] (B2)  to node[left,pos=0.50]{\tiny $(2,5)$}  (t1);    % AJOUT DE JOHANNE

\end{scope}

%% normal
 \begin{scope}
     \draw [-] (X1) to   node[right,pos=0.80]{\tiny $L+4$} (z1);
     \draw [-] (X1B) to node[left,pos=0.80]{\tiny $L+4$} (z1);
     \draw [-] (X2) to node[left,pos=0.50]{\tiny $L+4$}  (z2);
     \draw [-] (X2B) to node[right,pos=0.80]{\tiny $L+4$}  (z2);
     \draw [-] (Xn) to  node[right,pos=0.80]{\tiny $L+4$}   (zn);
     \draw [-] (XnB) to node[left,pos=0.80]{\tiny $L+4$}  (zn);
       \draw [-] (V1)  to node[above,pos=0.50]{  3}   (X2);
       \draw [-] (V1) to node[above,pos=0.50]{  6}  (X2B);
       \draw [-] (fake3) to  (Xn);% AJOUT DE JOHANNE
     \draw [-] (fake4) to  (XnB);% AJOUT DE JOHANNE
 \end{scope}

\end{tikzpicture}
\caption{A piece of graph $G$ corresponding to the $n$ variables. 
Double thick edges denote ($k+1$-parallel) edges (forced edges). The notation $(\alpha,\tau)$ for the dashed edges  indicates that there are $\alpha$ parallel edges appearing at time $\tau$. For  readability there are several copies of $t$. As in the proof, $L=7\frac{n}2$. }
  \label{fig:PSPACE:variable}
\end{center}
\end{figure}
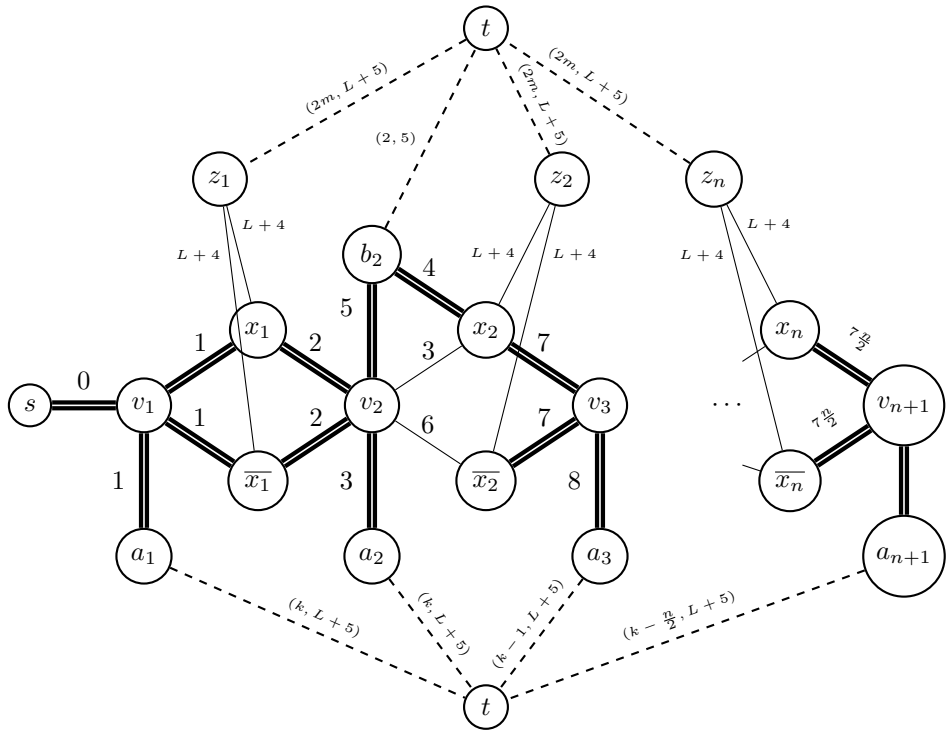

\begin{figure}[!h]
\begin{center}
\begin{tikzpicture}
\begin{scope}[every node/.style={circle,thick,draw}]
     \node (Vn) at (0,5) {$v_{n+1}$};
    \node (w1) at (2.5,5) {$w $} ;
    \node (c1) at (4.8,6.5) {$c_1$} ;
    \node (c2) at (4.8,5) {$c_2$} ;
    \node (cm) at (4.8,3) {$c_m$} ;
     \node (t1) at (2.5,7.5) {$t$} ;

     \node (t2) at (11.5,5.5) {$t$} ;
          \node (X1) at (6.5,3.5) {$x_{i}$} ;
     \node (X2) at (6.8,5.5) {$\overline{x_{j}}$} ;
     \node (X3) at (6.5,7.5) {$\overline{x_{\ell}}$} ;
     \node (Z1) at (9,3.5) {$z_{i}$} ;
     \node (Z2) at (9,5.5) {$z_{j}$} ;
     \node (Z3) at (9,7.5) {$z_{\ell}$} ;

\end{scope}

    \node (3p1) at (5,4) {$\vdots$};

%% Forced edges
\begin{scope}
    \draw [-,ultra thick,double] (Vn) to node[above,pos=0.50]{\tiny $L+1$}  (w1);
    \draw [-,ultra thick,double] (c2) to  node[sloped,above,pos=0.50]{\tiny $L+3$} (X1);
    \draw [-,ultra thick,double] (c2) to  node[sloped,above,pos=0.50]{\tiny $L+3$}(X2);
    \draw [-,ultra thick,double] (c2) to  node[sloped,above,pos=0.50]{\tiny $L+3$} (X3);
\end{scope}

%% k copies
\begin{scope}
    \draw [-] (w1) to  node[left,pos=0.50]{\tiny $L+2$}  (t1);
    \draw [-] (X1) to  node[above,pos=0.50]{\tiny $L+4$}  (Z1);
    \draw [-] (X2) to  node[above,pos=0.50]{\tiny $L+4$}  (Z2);
    \draw [-] (X3) to  node[above,pos=0.50]{\tiny $L+4$}  (Z3);

 \end{scope}

%% 2 copies
\begin{scope}
    \draw [-,thick,dashed] (w1) to  node[sloped,above,pos=0.50]{\tiny  $(2,L+2)$} (c1);
    \draw [-,thick,dashed] (w1) to  node[sloped,above,pos=0.60]{\tiny $(2,L+2)$} (c2);
    \draw [-,thick,dashed] (w1) to  node[sloped,above,midway]{\tiny $(2,L+2)$} (cm);
\end{scope}

%% 2 copies
\begin{scope}
    \draw [-,thick,dashed] (Z1) to  node[sloped,dashed,above,pos=0.50]{\tiny  $(2m,L+5)$} (t2);
    \draw [-,thick,dashed] (Z2) to  node[sloped,dashed,above,pos=0.50]{\tiny $(2m,L+5)$} (t2);
    \draw [-,thick,dashed] (Z3) to  node[sloped,dashed,above,midway]{\tiny $(2m,L+5)$}(t2);
\end{scope}

%% normal

\end{tikzpicture}
\caption{A piece of graph $G$ corresponding to the clauses. Double thick edges denote ($k+1$-parallel) edges (forced edges). The notation $(\alpha,\tau)$ for the dashed edges  indicates that there are $\alpha$ parallel edges appearing at time $\tau$. For readability, there are several copies of the target vertex $t$. As in the proof, $L=7\frac{n}2$. We assume that clause $c_{2}=(x_{i}\lor \overline{x_{j}}\lor \overline{x_{\ell}})$.}
  \label{fig:PSPACE:clause}
\end{center}
\end{figure}
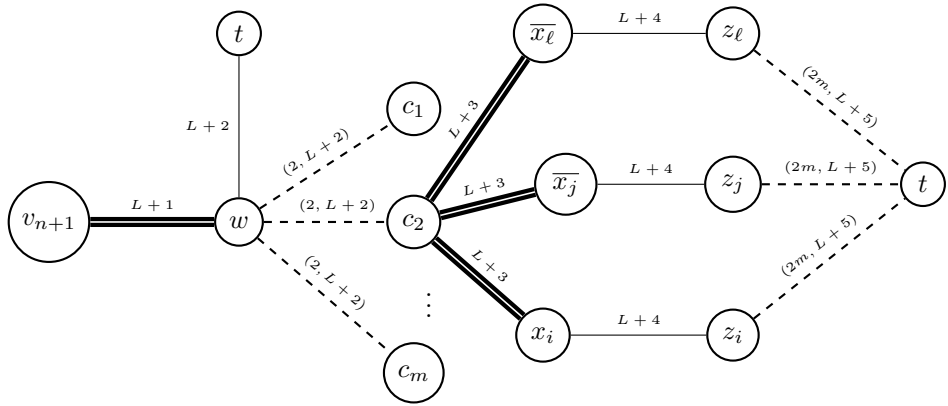

\begin{itemize}

 \item for each variable $x_{i}$ where $i$ is odd, we introduce an "existential" {gadget}. We put a diamond on 4 vertices $v_{i},x_i,\overline{x_i},v_{i+1}$, with 4 {\it forced} edges $(v_{i},x_i,\impairUN)$, $(v_{i},\overline{x_i},\impairUN)$, $(x_i,v_{i+1},\impairDEUX)$ and $(\overline{x_i},v_{i+1},\impairDEUX)$. 
 Moreover, we add one forced edge $(v_{i},a_i,\impairUN)$,  and {$\K-\lfloor \frac{i}2\rfloor$} parallel edges $(a_i,t,\FIN)$. % for each  $i$  $2\leq i\leq n+1$,

\item for each variable $x_{i}$ where $i$ is even, we introduce a "universal" {gadget}. We put a diamond on 4  vertices  $v_{i},x_i,\overline{x_i},v_{i+1}$ with one edge $(v_{i},x_i,\pairUN)$, one edge  $(v_{i},\overline{x_i},\pairQUATRE)$,  2 {\it forced} edges  $(x_i,v_{i+1},\pairCINQ)$ and $(\overline{x_i},v_{i+1},\pairCINQ)$.  Moreover, there  are 2 {\it forced} edges $(x_i,b_{i}, \pairDEUX)$ and $(b_{i}, v_{i}, \pairTROIS)$.   Vertex $b_{i}$ is linked to $t$ with $2$  parallel edges $(b_{i},t,\pairDEUX)$.  Finaly,  we add one forced edge $(v_{i},a_i,\pairUN)$,  and {$\K-\lfloor \frac{i}2\rfloor$}  parallel edges $(a_i,t,\FIN)$. % for each  $i$  $2\leq i\leq n+1$,
\item  for each  $i$,   $1\leq i\leq n$,  we have one  edge $(\overline{x_i},z_i,\FINmoinsUN)$,  one  edge $(x_i,z_i,\FINmoinsUN)$, and %\textcolor{blue}{$\K-\lfloor \frac{i}2\rfloor$} 
{$2m$} parallel edges $(z_i,t,\FIN)$.

%one  edge $(\overline{x_i},z_i,\FINmoinsUN)$,  and  $\KK$ parallel edges $(z_i,t,\FIN)$
%\item one  edge $({x_i},{z_i,\FINmoinsUN)$,  and  $\KK$ parallel edges $(z_i,t,\FIN)$ for each  $i$,   $1\leq i\leq n$.
\end{itemize}
%\textcolor{blue}{I think $\K-\lfloor \frac{i}2\rfloor$  parallel edges $(a_i,t,\FIN)$ for both cases $i$ odd or even; and $\K-\lfloor \frac{i}2\rfloor$ parallel edges $(z_i,t,\FIN)$; $\K = 2m+\frac{n}{2}$.} 
%\textcolor{blue}{Johanne:I do not believe} 
We introduce the edges associated with the clauses (see Figure~\ref{fig:PSPACE:clause}). For each clause $C_{i}=(\ell_{1}\lor\ell_{2}\lor \ell_{3})$, $1\leq i \leq m$, we have
\begin{itemize}
 \item  one forced edge  $(v_{n+1},w,\FINmoinsQUATRE)$,  one edge $(w,t,\FINmoinsTROIS)$
 
 \item two parallel edges $(w,C_{i},\FINmoinsTROIS)$; 
 \item a forced edge $(C_{i},\ell,\FINmoinsDEUX)$, for each $\ell \in \{\ell_{1},\ell_{2},\ell_{3}\}$.
\end{itemize}

  Note that $G$ is constructed in polynomial time.\\  %We prove that Satisfier wins in φ if and only if Traveller wins. 

    The remaining part of the proof is given in Appendix \ref{app:pspace}. We now give an intuition on how the reduction works. Intuitively, if the formula is true, Traveller follows the diamonds (vertex-gadgets) to reach $v_{n+1}$. When Traveller is at $v_i$ for some even integer $i$, Blocker must block exactly one edge between $(v_i,x_i,\pairUN))$ or $(v_{i},\overline{x_i},\pairQUATRE)$ if they do not want to loose. They choose the one they want corresponding to the universal variable $\forall x_i$. 
    For an odd $i$, Traveller chooses to go to $x_i$ or $\overline{x_i}$ according to the assignment they want to give to the existential $\exists x_i$.

    When Traveller is at node $x_i$ or $\overline{x_i}$, Blocker must not block the edge leading to $z_i$. Otherwise, they lose. Therefore, at $v_{n+1}$, Traveller has seen exactly $\frac{n}{2}$ blocked edges and has visited either $x_i$ or $\overline{x_i}$ for all $i$ (if Blocker had blocked more edges, Traveller would go to $a_{n+1}$ and win).

    Blocker then must block edge $(w,t,\FINmoinsTROIS)$, and cannot prevent Traveller from reaching a clause vertex $C_j$ for some $j$. From $C_j$, Traveller can visit a vertex $x_i$ or $\overline{x_i}$ that they already visited before (because the formula is satisfiable and they chose the assignments they could depending on the choices of Blocker accordingly). Consequently,  from visited vertex $x_i$ or $\overline{x_i}$, Traveller can go to vertex $z_i$ and take one of the parallel edges $(z_i,t,\FIN)$ to reach $t$ and win.\\

    If the formula is not true, thanks to the path to $v_{n+1}$, Blocker can force the assignment for universal variables {in such a way that,  regardless of Traveller's decisions on existential variables, the formula remains unsatisfied.}
  %  whatever choice Traveller makes on an existential variable, the formula at the end is not satisfied. At $v_{n+1}$, Traveller has seen exactly $\frac{n}{2}$ blocked edges. 
  Blocker {blocks} edge $(w,t)$. Blocker then can force Traveller to visit a vertex $C_j$ which is not satisfiable by the truth assignment, using $2(m-1)$ edges to block access to the other clauses vertices. Traveller then must visit a vertex $x_i$ or $\overline{x_i}$ that they have not visited before. This allows Blocker to block the edge from that node leading to $z_i$, preventing Traveller from reaching $t$.

%The proof of this Lemma is in Appendix~\ref{app:pspace}.
 
\end{proof}

 In Appendix~\ref{app:k=1}, we provide a polynomial algorithm for the case $k=1$.

\section{NP-hardness for a constant number of blocked edges for static CTP and Li-TCTP}\label{sec:static}

In this section, we prove that the Canadian Traveller Problem is NP-complete in the static setting even when $k=4$, using a reduction from 3-SAT. We then show how this reduction can be adapted to demonstrate that Li-TCTP is hard for $k=2$.

\begin{theorem}\label{th:static}
$k$-CTP is NP-hard even for $k=4$.
\end{theorem}
\begin{proof}
    Let us consider an instance of 3-SAT on $n$ binary variables $x_1,\dots,x_n$ and $m$ clauses $C_1,\dots,C_m$. 

    We build a graph -  a multigraph (we could remove multi-edges, but we keep them for the ease of explanation) - such that Traveller can guarantee to reach $t$ by some given time $T$ (when up to $4$ edges can be blocked) if and only if the formula is satisfiable.
     The intuition of using a constant number of edges is that the Traveller's path will walk through each variable (as in our previous construction) to choose their truth assignment. Then, they will walk through each clause once after another, choosing a {literal} their assignment satisfies. Suppose they visit an unsatisfied {literal}. In that case, Blocker will block $3$ edges to force Traveller to return to the corresponding variable assignment and then block the edge leading to the target.

   The graph contains two vertices, $s$ and $t$, and we consider a large integer $M$ (larger than $2n+2m$). {As previously, we use the term `forced edge'} to denote $5=k+1$ copies of an edge. A forced edge then cannot be blocked. The construction is illustrated in Figure~\ref{fig:reductionStatic}.
    To each variable $x_i$, the following gadget is associated: 
    \begin{itemize}
        \item 6 vertices $v_{i-1},v_i,x_i,\overline{x_i},z_i,\overline{z_i}$
        \item 4 forced edges of length 1: $\EDGE{v_{i-1}}{x_i}$, $\EDGE{v_{i-1}}{\overline{x_i}}$, $\EDGE{x_i}{v_{i}}$, $\EDGE{\overline{x_i}}{v_{i}}$
        \item 2 edges $\EDGE{x_i}{z_i}$ and $\EDGE{\overline{x_i}}{\overline{z_i}}$ of length $10M$
        \item Each $z_i$ and each $\overline{z_i}$ is linked to $t$ by two copies of an edge of length 0.
    \end{itemize}
Gadgets are linked together, meaning that $v_i$ is part of the gadgets of $x_i$  and $x_{i+1}$.

 Finally: 
\begin{itemize}
    \item $v_n$ is linked to a vertex $w$ by a forced edge of length $2m+27M$, and $w$ is linked to $t$ by 4 copies of an edge of length 0;
    \item we consider as starting vertex $s=v_0$.
\end{itemize}

Then, to each clause $C_j$ containing 3 literals $\ell_j^1$, $\ell_j^2$, $\ell_j^3$, is associated the following gadget:
\begin{itemize}
    \item 7 vertices $c_j$, $\alpha_j^1$, $\alpha_j^2$, $\alpha_j^3$, $\beta_j^1$, $\beta_j^2$, $\beta_j^3$ (note that $\beta_j^1$ and $\beta_j^2$ are not drawn in Figure~\ref{fig:reductionStatic} for the sake of readability);
    \item $c_j$ is linked to each $\alpha_j^s$ by a forced edge of length 1;
    \item each $\alpha_j^s$ is linked to $c_{j+1}$ (vertex of the gadget of the next clause) by three copies of an edge of length 1;
    \item $\alpha_j^s$ is linked to $\beta_j^s$ by two copies of an edge of length $d_j=2m-2j+1$.
\end{itemize}
Finally, vertex $c_{m+1}$ (to which are linked the $\alpha$-vertices of the last clause) is linked to $t$ by a forced edge of length $18M$.

Then, the vertex-gadgets and the clause-gadgets are linked as follows:
\begin{itemize}
    \item $v_n$ is linked to $c_1$ by an edge of length $9M$.
    \item Vertex $\beta_j^s$ corresponds to a literal of clause $C_j$: we link $\beta_j^s$ to the corresponding literal $x_i$ or $\overline{x_i}$ in the vertex-gadgets by a forced edge of length $8M$. 
\end{itemize}
To conclude the construction of the 4-CTP instance, we set $T=27M+2n+2m$ and $k=4$.  

 The remaining part of the proof is given in Appendix \ref{app:statichardness}; we now give a very rough idea of how the reduction works. If the formula is satisfiable, Traveller follows the diamonds (vertex-gadgets) to reach $v_{n}$, following a truth value satisfying the formula. If Blocker blocks an edge, Traveller can reach $t$ on time (through $w$). Otherwise, Traveller traverses the clause gadgets. If they are not blocked, they reach $t$ after $c_{m+1}$ on time. Otherwise, Blocker has blocked the three copies of an edge between some $\alpha_j^s$ and $c_{j+1}$. Then Traveller can go to $\beta_j^s$, then to the corresponding literal $x_i$ or $\overline{x_i}$, visited before, so the edge towards $z_i$ or $\overline{z_i}$ is not blocked. Then Traveller reaches $t$ (on time), as only one more edge can be blocked.

  If the formula is not satisfiable, then it can be shown that Traveller has to follow a truth value in the variable gadgets and then go across the clause gadgets. Blocker does not block anything until Traveller goes to an $\alpha_j^s$ corresponding to a false literal. Then Blocker blocks the three copies of  $\EDGE{\alpha_j^s}{c_{j+1}}$. Traveller will have no choice but to go to the corresponding vertex $x_i$ or $\overline{x_i}$, unvisited before, and Blocker block the edge towards $z_i$ or $\overline{z_i}$, making Traveller too late to reach $t$ on time. 
\end{proof}

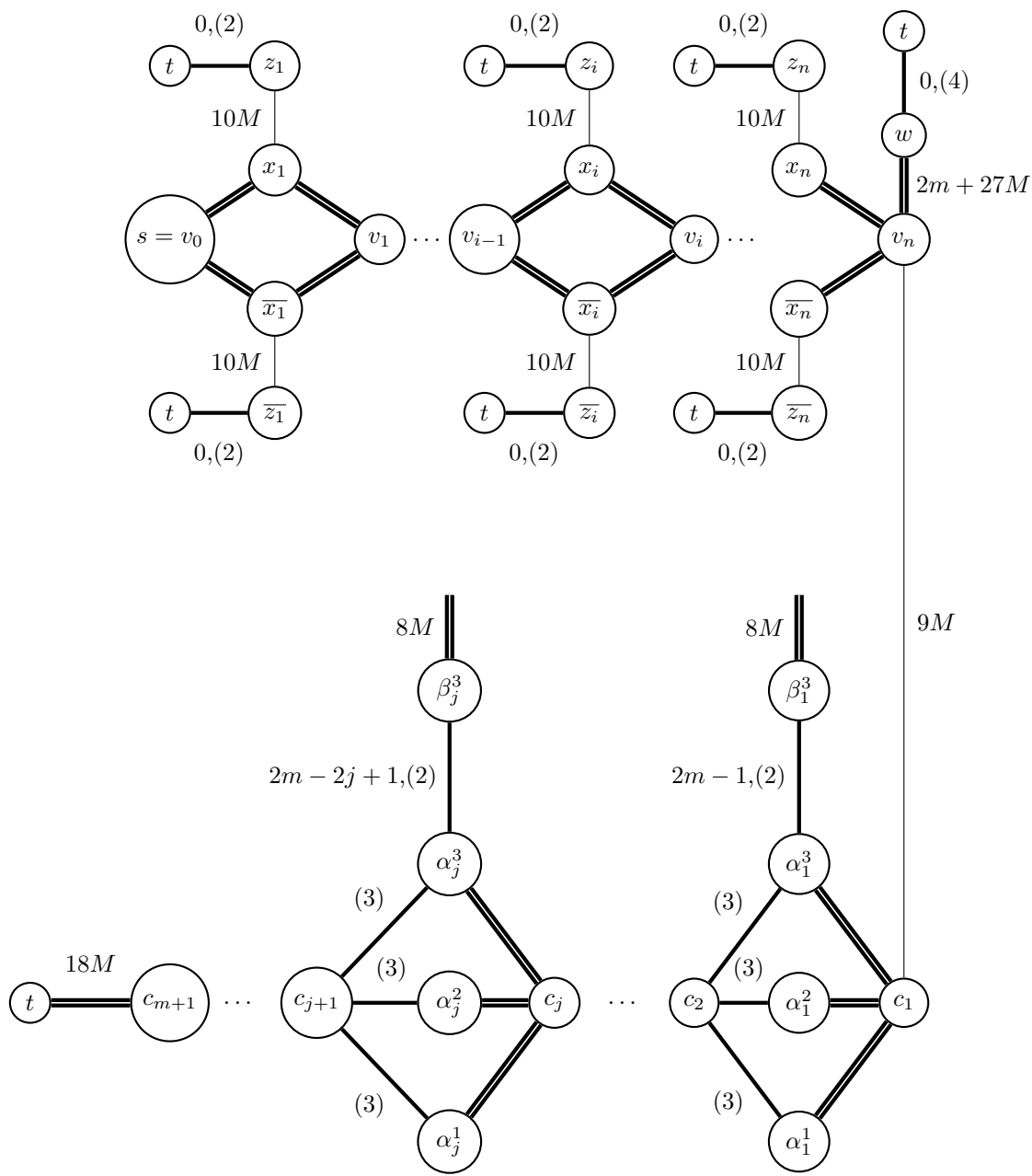
\begin{figure}[H]
\begin{center}
\begin{tikzpicture}
\begin{scope}[every node/.style={circle,thick,draw}]
%    \node (S) at (0,5) {$s$};
    \node (V0) at (1.5,5) {$s=v_0$};
    \node (X1) at (3,6) {$x_1$};
    \node (X1B) at (3,4) {$\overline{x_1}$};
    \node (V1) at (4.5,5) {$v_1$};
    \node (VI-1) at (6,5) {$v_{i-1}$};
    \node (XI) at (7.5,6) {$x_i$};
    \node (XIB) at (7.5,4) {$\overline{x_i}$};
    \node (VI) at (9,5) {$v_i$};
    \node (XN) at (10.5,6) {$x_n$};
    \node (XNB) at (10.5,4) {$\overline{x_n}$};
    \node (VN) at (12,5) {$v_n$};
    \node (W) at (12,6.5) {$w$} ;
    \node (TW) at (12,8) {$t$} ;
    \node (C1) at (12,-6) {$c_1$} ;
    \node (A1) at (10.5,-8) {$\alpha^1_1$} ;
    \node (A2) at (10.5,-6) {$\alpha^2_1$} ;
    \node (A3) at (10.5,-4) {$\alpha^3_1$} ;
    \node (B3) at (10.5,-1.5) {$\beta^3_1$} ;    
    \node (C2) at (9,-6) {$c_2$} ;
    \node (CJ) at (7,-6) {$c_{j}$} ;
    \node (A4) at (5.5,-8) {$\alpha^1_j$} ;
    \node (A5) at (5.5,-6) {$\alpha^2_j$} ;
    \node (A6) at (5.5,-4) {$\alpha^3_j$} ;
    \node (B6) at (5.5,-1.5) {$\beta^3_j$} ;
    \node (CJ+1) at (3.6,-6) {$c_{j+1}$} ;
    \node (CM+1) at (1.5,-6) {$c_{m+1}$} ;
    \node (T1) at (1.5,7.5) {$t$} ;
    \node (T2) at (6,7.5) {$t$} ;
    \node (T3) at (9,7.5) {$t$} ;
    \node (Z1) at (3,7.5) {$z_{1}$} ;
    \node (ZI) at (7.5,7.5) {$z_{i}$} ;
    \node (ZN) at (10.5,7.5) {$z_{n}$} ;
    \node (T4) at (1.5,2.5) {$t$} ;
    \node (T5) at (6,2.5) {$t$} ;
    \node (T6) at (9,2.5) {$t$} ;
    \node (T7) at (-0.5,-6) {$t$} ;
    \node (K1) at (3,2.5) {$\overline{z_1}$} ;
    \node (KI) at (7.5,2.5) {$\overline{z_i}$} ;
    \node (KN) at (10.5,2.5) {$\overline{z_n}$} ;
\end{scope}
%\draw [-] (V0) to node[right,pos=0.50]{2} (V2);
    \node (fake6) at (5.5,0) {$ $};
    \node (fake7) at (10.5,0) {$ $};
    \node (Dots) at (5.2,5) {$\dots$};
    \node (DotsB) at (9.7,5) {$\dots$};
    \node (DotsC) at (8,-6) {$\dots$};
    \node (DotsD) at (2.5,-6) {$\dots$};

%% Forced edges
\begin{scope}[>={Stealth[black]},
              every node/.style={circle},
              every edge/.style={draw=black, dotted}]
%    \draw [-,ultra thick,double] (S) to  (V0);
    \draw [-,ultra thick,double] (V0) to  (X1);
    \draw [-,ultra thick,double] (V0) to  (X1B);
    \draw [-,ultra thick,double] (X1) to  (V1);
    \draw [-,ultra thick,double] (X1B) to  (V1);
    \draw [-,ultra thick,double] (VI-1) to  (XI);
    \draw [-,ultra thick,double] (VI-1) to  (XIB);
    \draw [-,ultra thick,double] (XI) to  (VI);
    \draw [-,ultra thick,double] (XIB) to  (VI);
    \draw [-,ultra thick,double] (XN) to  (VN);
    \draw [-,ultra thick,double] (XNB) to  (VN);
    \draw [-,ultra thick,double] (VN) to node[right,pos=0.50]{$2m+27M$} (W);
    \draw [-,ultra thick,double] (A1) to  (C1);
    \draw [-,ultra thick,double] (A2) to  (C1);
    \draw [-,ultra thick,double] (A3) to  (C1);
    \draw [-,ultra thick,double] (A4) to  (CJ);
    \draw [-,ultra thick,double] (A5) to  (CJ);
    \draw [-,ultra thick,double] (A6) to  (CJ);
    \draw [-,ultra thick,double] (CM+1) to node[above,pos=0.50]{$18M$} (T7);
    \draw [-,ultra thick,double] (B3) to node[left,pos=0.50]{$8M$} (fake7);
    \draw [-,ultra thick,double] (B6) to node[left,pos=0.50]{$8M$} (fake6);
%    \draw [-,ultra thick,double] (VN) to  node[right,pos=0.50]{$\Delta$} (B00);
%    \draw [-,ultra thick,double] (Blogmj) to node[left,pos=0.30]{$\Delta$}  (fake6);
%    \draw [-,ultra thick,double] (Blogmj) to node[right,pos=0.50]{$\Delta$} (fake7);
%    \draw [-,ultra thick,double] (Blogmj) to  node[right,pos=0.30]{$\Delta$} (fake8);
\end{scope}

%% k copies
\begin{scope}[>={Stealth[black]},
              every node/.style={circle},
              every edge/.style={draw=black, thick}]
    \draw [-,ultra thick] (Z1) to node[above,pos=0.50]{0,(2)} (T1);
    \draw [-,ultra thick] (ZI) to  
    node[above,pos=0.50]{0,(2)} (T2);
    \draw [-,ultra thick] (ZN) to node[above,pos=0.50]{0,(2)} (T3);
    \draw [-,ultra thick] (K1) to node[below,pos=0.50]{0,(2)} (T4);
    \draw [-,ultra thick] (KI) to  
    node[below,pos=0.50]{0,(2)} (T5);
    \draw [-,ultra thick] (KN) to node[below,pos=0.50]{0,(2)} (T6);
    \draw [-,ultra thick] (W) to node[right,pos=0.50]{0,($4$)} (TW);
    \draw [-,ultra thick] (A1) to node[below left,pos=0.40]{(3)} (C2);
    \draw [-,ultra thick] (A2) to node[above,pos=0.40]{(3)} (C2);
    \draw [-,ultra thick] (A3) to node[above left,pos=0.40]{(3)} (C2);
    \draw [-,ultra thick] (A4) to node[below left,pos=0.40]{(3)} (CJ+1);
    \draw [-,ultra thick] (A5) to node[above,pos=0.40]{(3)} (CJ+1);
    \draw [-,ultra thick] (A6) to node[above left,pos=0.40]{(3)} (CJ+1);
    \draw [-,ultra thick] (A3) to node[left,pos=0.50]{$2m-1$,(2)} (B3);
    \draw [-,ultra thick] (A6) to node[left,pos=0.50]{$2m-2j+1$,(2)} (B6);
\end{scope}

%% two copies
\begin{scope}[>={Stealth[black]},
              every node/.style={circle},
              every edge/.style={draw=blue, thick}]
%    \draw [-,double] (w2) to  (c1);
%    \draw [-,double] (w2) to  (c2);
%    \draw [-,double] (w2) to  (cm);
\end{scope}

%% normal
\begin{scope}[>={Stealth[black]},
              every node/.style={circle},
              every edge/.style={draw=blue, thick}]
    \draw [-] (X1) to node[left,pos=0.50]{$10M$} (Z1);
    \draw [-] (XI) to node[left,pos=0.50]{$10M$} (ZI);
    \draw [-] (XN) to node[left,pos=0.50]{$10M$} (ZN);
    \draw [-] (X1B) to node[left,pos=0.50]{$10M$} (K1);
    \draw [-] (XIB) to node[left,pos=0.50]{$10M$} (KI);
    \draw [-] (XNB) to node[left,pos=0.50]{$10M$} (KN);
    \draw [-] (VN) to node[right,pos=0.50]{$9M$} (C1);
%    \draw [-] (X1) to  (z);
\end{scope}

\node (fake1) at (9,-2) {$ $};
\node (fake2) at (5,-2) {$ $};
% \draw[dotted,thick] (B00) to (fake1);
% \draw[dotted,thick] (B00) to (fake2);
% \draw[dotted,thick] (fake2) to (Blogmj);
% \draw[dotted,thick] (fake1) to (Blogmj);

\end{tikzpicture}
\caption{Construction of the graph. Double-thick edges represent forced edges. For (single) thick edges, the number in parenthesis represents the number of parallel edges. The number (without parenthesis) represents an edge length; it is 1 when no number exists. For the sake of readability, vertices $\beta_j^1$ and $\beta_j^2$ are not depicted.}
  \label{fig:reductionStatic}
\end{center}
\end{figure}

The proof of this result can be adapted in the Li-TCTP setting, using $k=2$ blocked edges (see Appendix~\ref{app:utctp2}).

\begin{theorem}\label{th:utctp2}
    Li-TCTP is NP-hard, even if $k=2$. % is order \textcolor{blue}{To complete}
\end{theorem}

%\subsubsection*{Acknowledgement}

\newpage

\bibliography{ref}

\begin{thebibliography}{10}

\bibitem{awerbuch1984efficient}
Baruch Awerbuch and Shimon Even.
\newblock Efficient and reliable broadcast is achievable in an eventually connected network.
\newblock In {\em Proceedings of the third annual ACM symposium on Principles of distributed computing}, pages 278--281, 1984.

\bibitem{bar1991canadian}
Amotz Bar-Noy and Baruch Schieber.
\newblock The canadian traveller problem.
\newblock In {\em Proceedings of the second annual ACM-SIAM symposium on Discrete algorithms}, pages 261--270, 1991.

\bibitem{bender2015optimal}
Marco Bender and Stephan Westphal.
\newblock An optimal randomized online algorithm for the k-canadian traveller problem on node-disjoint paths.
\newblock {\em Journal of Combinatorial Optimization}, 30(1):87--96, 2015.

\bibitem{Spanners2}
Davide Bil{\`{o}}, Gianlorenzo D'Angelo, Luciano Gual{\`{a}}, Stefano Leucci, and Mirko Rossi.
\newblock Sparse temporal spanners with low stretch.
\newblock In Shiri Chechik, Gonzalo Navarro, Eva Rotenberg, and Grzegorz Herman, editors, {\em 30th Annual European Symposium on Algorithms, {ESA} 2022, September 5-9, 2022, Berlin/Potsdam, Germany}, volume 244 of {\em LIPIcs}, pages 19:1--19:16. Schloss Dagstuhl - Leibniz-Zentrum f{\"{u}}r Informatik, 2022.
\newblock \href {https://doi.org/10.4230/LIPICS.ESA.2022.19} {\path{doi:10.4230/LIPICS.ESA.2022.19}}.

\bibitem{Spanners1}
Arnaud Casteigts, Joseph~G. Peters, and Jason Schoeters.
\newblock Temporal cliques admit sparse spanners.
\newblock {\em J. Comput. Syst. Sci.}, 121:1--17, 2021.
\newblock \href {https://doi.org/10.1016/J.JCSS.2021.04.004} {\path{doi:10.1016/J.JCSS.2021.04.004}}.

\bibitem{demaine2021approximating}
Erik~D Demaine, Yamming Huang, Chung-Shou Liao, and Kunihiko Sadakane.
\newblock Approximating the canadian traveller problem with online randomization.
\newblock {\em Algorithmica}, 83:1524--1543, 2021.

\bibitem{DBLP:conf/atmos/DibbeltSW14}
Julian Dibbelt, Ben Strasser, and Dorothea Wagner.
\newblock Delay-robust journeys in timetable networks with minimum expected arrival time.
\newblock In Stefan Funke and Mat{\'{u}}s Mihal{\'{a}}k, editors, {\em 14th Workshop on Algorithmic Approaches for Transportation Modelling, Optimization, and Systems, {ATMOS} 2014, September 11, 2014, Wroclaw, Poland}, volume~42 of {\em OASIcs}, pages 1--14. Schloss Dagstuhl - Leibniz-Zentrum f{\"{u}}r Informatik, 2014.
\newblock \href {https://doi.org/10.4230/OASICS.ATMOS.2014.1} {\path{doi:10.4230/OASICS.ATMOS.2014.1}}.

\bibitem{DBLP:conf/stacs/FuchsleMNR22}
Eugen F{\"{u}}chsle, Hendrik Molter, Rolf Niedermeier, and Malte Renken.
\newblock Delay-robust routes in temporal graphs.
\newblock In Petra Berenbrink and Benjamin Monmege, editors, {\em 39th International Symposium on Theoretical Aspects of Computer Science, {STACS} 2022, March 15-18, 2022, Marseille, France (Virtual Conference)}, volume 219 of {\em LIPIcs}, pages 30:1--30:15. Schloss Dagstuhl - Leibniz-Zentrum f{\"{u}}r Informatik, 2022.
\newblock \href {https://doi.org/10.4230/LIPICS.STACS.2022.30} {\path{doi:10.4230/LIPICS.STACS.2022.30}}.

\bibitem{holme2012temporal}
Petter Holme and Jari Saram{\"a}ki.
\newblock Temporal networks.
\newblock {\em Physics reports}, 519(3):97--125, 2012.

\bibitem{DBLP:journals/jcss/KempeKK02}
David Kempe, Jon~M. Kleinberg, and Amit Kumar.
\newblock Connectivity and inference problems for temporal networks.
\newblock {\em J. Comput. Syst. Sci.}, 64(4):820--842, 2002.
\newblock \href {https://doi.org/10.1006/JCSS.2002.1829} {\path{doi:10.1006/JCSS.2002.1829}}.

\bibitem{mertzios2013temporal}
George~B Mertzios, Othon Michail, Ioannis Chatzigiannakis, and Paul~G Spirakis.
\newblock Temporal network optimization subject to connectivity constraints.
\newblock In {\em International Colloquium on Automata, Languages, and Programming}, pages 657--668. Springer, 2013.

\bibitem{nikolova2008route}
Evdokia Nikolova and David~R Karger.
\newblock Route planning under uncertainty: The canadian traveller problem.
\newblock In {\em AAAI}, pages 969--974, 2008.

\bibitem{papadimitriou1991shortest}
Christos~H Papadimitriou and Mihalis Yannakakis.
\newblock Shortest paths without a map.
\newblock {\em Theoretical Computer Science}, 84(1):127--150, 1991.

\bibitem{polychronopoulos1996stochastic}
George~H Polychronopoulos and John~N Tsitsiklis.
\newblock Stochastic shortest path problems with recourse.
\newblock {\em Networks: An International Journal}, 27(2):133--143, 1996.

\bibitem{schaefer1978complexity}
Thomas~J Schaefer.
\newblock On the complexity of some two-person perfect-information games.
\newblock {\em Journal of computer and system Sciences}, 16(2):185--225, 1978.

\bibitem{stockmeyer1973word}
Larry~J Stockmeyer and Albert~R Meyer.
\newblock Word problems requiring exponential time (preliminary report).
\newblock In {\em Proceedings of the fifth annual ACM symposium on Theory of computing}, pages 1--9, 1973.

\bibitem{westphal2008note}
Stephan Westphal.
\newblock A note on the k-canadian traveller problem.
\newblock {\em Information Processing Letters}, 106(3):87--89, 2008.

\end{thebibliography}

\appendix

\section{Discussion on our model and result (for static graphs) and the ones in \cite{bar1991canadian}.}
\label{app:difference}

In this work, we focus on the CTP problem where the status of an edge (being blocked or not) does not change while Traveller explores the graph (see Definition~\ref{def:staticctp} for static graphs).
{Initially, Traveller is unaware of whether an edge is blocked. However, upon reaching an endpoint of an edge, they learn its status, which remains unchanged thereafter.}
%whether an edge is blocked or not is initially unknown to Traveller, but when Traveller reaches an endpoint of an edge they know if it is blocked or not, and this does not change afterwards.
In this model, which is the one studied in~\cite{papadimitriou1991shortest} where the problem is introduced, we prove that CTP is $NP$-hard for $k=4$ (see Theorem~\ref{th:static}). To our knowledge, this is the first hardness proof for some constant $k$ for this problem.

In contrast, in~\cite{bar1991canadian}, an algorithm solving CTP in polynomial time for fixed $k$ is provided. This algorithm solves the problem in a slightly different model where the status of an edge can change from unblocked to blocked. % (i.e. once an edge is blocked, it is forever, but the other way around is no longer the case).
Notably, the recursion of the algorithm does not use the knowledge of the previously visited vertices and, hence, what non-blocked edges were seen.

Intuitively, {the complexity differences between these models can be intuitively understood as follows:} 
%to give a rough idea of why the complexity is different in the two models,
in the latter model {it suffices for Traveller to remember only the set of} blocked edges encountered so far. % \textcolor{blue}{because} a non-blocked edge might be blocked later, seeing a non-blocked edge does not give any information for the future, so it is pointless to remember them.
{In deed, if Traveller encounters a non-blocked edge, it might become blocked later. Observing a non-blocked edge at any point does not provide predictive information for subsequent edges, rendering it unnecessary to retain such information.}
{Consequently, the number of possible configurations remains polynomial for fixed $k$.}  %leading to a polynomial number of possible configurations for fixed $k$. 
{Conversely}, in the model of~\cite{papadimitriou1991shortest}, Traveller shall also remember the non-blocked edges seen so far in their exploration, as this is information they have learned, leading to an exponential number of configurations even for fixed $k$. This can be seen in the proof of Theorem~\ref{th:static}, where when reaching $v_n$ Traveller has seen non-blocked edges corresponding to a truth value (exponential number of different configurations, even without having seen any blocked edge yet).

%\textcolor{blue}{Mikaël:Interestingly, for the PSPACE hardness proof in~\cite{bar1991canadian}, they explain that it can be done by adapting the PSPACE hardness proof of~\cite{papadimitriou1991shortest} (without any description of this adaptation). The proof in~\cite{papadimitriou1991shortest} heavily relies on the fact that once an edge is seen, its weight cannot be changed. One can wonder if the problem where an edge can become blocked later is still PSPACE-hard.}

\section{Proof of Lemma~\ref{lem:acyclic2}}\label{app:lemacyclic}

{\bf Lemma~\ref{lem:acyclic2}.} {\it 
    For the Canadian Traveller Decision Problem in a weighted {DAG} $G$,   Traveller has a winning strategy to reach $t$ with arrival time at most $T$ if and only if $\pi_k(s) \leq T$. %\textcolor{blue}{Comment: I removed the sentence on Blocker, which was useless as we use if and only if}%Blocker has a winning strategy, i.e., a strategy to prevent Traveller from reaching $t$ within budget $T$, if $\pi_k(s) > T$.
    }

\begin{proof}
    Let us describe a strategy for  Traveller when $\pi_k(s) \leq T$: Traveller remembers how many edges were already blocked. Assume that Traveller is at node $u$, and $m_2$ edges are missing from $u$, and there were already $m_1$ edges missing before. Then, Traveller travels an existing edge $uw$ to minimize $(\pi_{k-m_1-m_2}(w)+d(u,w))$.

    We prove that for all $i = 1,\ldots, k$, for all $v$, Traveller has a strategy to go from $v$ to $t$ within budget $\pi_i(v)$ if Blocker can block at most $i$ arcs between $v$ and $t$. The proof is obtained by induction on the topological order of the vertices. {If $v$ has only $t$ as outneihbor}, the statement holds for $v$. {Otherwise, assume that the statement holds for all vertices greater than $v$ in the topological order}. 
    Assume that at node $v$, Blocker blocks $m$ arcs starting from $v$, then Traveller travels the arc $vu$ minimizing $d(v,u)+\pi_{i-m}(u)$. We have $$d(v,u)+\pi_{i-m}(u) \leq  \text{(m+1)-th min}_{u\in N^+(v)} \pi_{i-m}(u) +d(v,u)  \leq \pi_i(v)$$
    
    By the induction hypothesis, there is a strategy for Traveller to go from $u$ to $t$ within budget $\pi_{i-m}(u)$ when Blocker can block at most $i-m$ arcs between $u$ and $t$. It implies that Traveller has a strategy to go from $v$ to $t$ within budget $\pi_i(v)$.

    \vspace{0.3cm}
    If $\pi_k(s) >T$, there exists a winning strategy for Blocker: We prove that for all $i = 1,\ldots, k$, for all $v$, if Blocker can block $i$ arcs between $v$ and $t$, Blocker has a strategy to force Traveller to spend budget $\pi_i(v)$ to go from $v$ to $t$. The proof is obtained by induction on the topological order of vertices. {If $v$ has only $t$ as outneighbor}, Blocker blocks $i$ arcs from $v$ to $t$ with minimum weights, and the statement holds for $v$. {Otherwise, assume that the statement holds for all vertices greater than $v$ in the topological order}. 
    At node $v$, there exists $m\leq i$ such that $\text{(m+1)-th min}_{u\in V^+(v)} \pi_{i-m}(u) +d(v,u) = \pi_i(v)$. Blocker blocks $m$ arcs $vu$ that minimizes $\pi_{i-m}(u) +d(v,u)$. Assume that Traveller takes an existing arc $vw$, then 
    $$d(v,w)+\pi_{i-m}(w) \geq \text{(m+1)-th min}_{u\in V^+(v)} \pi_{i-m}(u) +d(v,u) = \pi_i(v)$$ 
    By the induction hypothesis, there exists a strategy for Blocker to force Traveller to spend $\pi_{i-m}(w)$ to go from $w$ to $t$ when Blocker still can block $(i-m)$ arcs between $w$ and $t$. Therefore, if Blocker can block $i$ arcs between $v$ and $t$, Blocker has a strategy to force Traveller to spend at least $\pi_i(v)$ to go from $v$ to $t$.

\end{proof}

\section{Remaining part of the proof of Theorem \ref{th:pspace}}\label{app:pspace}

   %Let us now show this. 
   We first consider the case where the formula is satisfiable and consider a truth assignment $\sigma$ satisfying it. Traveller starts from $s$, goes to $v_{1}$ at time $1$.  Traveller goes from $v_{1}$ to $v_{n+1}$, going through $x_i$ or $\overline{x_{i}}$.   We can divide this path into $n$ parts.

 At time $1$, Traveller goes through $x_{1}$ if $x_{1}$ is true in $\sigma$, through $\overline{x_{1}}$ otherwise. So, they reach $v_{2}$ at time $3$. Notably, the edges $(v_{1},x_1,1)$, $(v_{1},\overline{x_1},1)$, $({x_1},v_{2}, 2)$, $(\overline{x_1},v_{2}, 2)$ are forced edges, and Blocker cannot prevent Traveller from reaching vertex $v_{2}$.
 
Upon reaching vertex $v_{2}$ at time $3$,  Blocker can choose to block edges.
If Blocker blocks  edges $(v_{2},x_2,3)$ and $(v_{2},\overline{x_2},6)$, Traveller can cross the forced edge $(v_{2},a_2,3)$ and one of the $\KmoinsUN$ parallel edges $(a_{2},t,\FIN)$. % among the $\KmoinsUN$ available.
Since there are $\KmoinsUN$  parallel edges $(a_{2},t,\FIN)$, Blocker cannot block all of them simultaneously, as it can now only block $\KmoinsDEUX$ edges. So Traveller can reach $t$. So Blocker cannot block both edges $(v_{2},x_2,3)$ and $(v_{2},\overline{x_2},6)$ simultaneously.  
%Suppose Blocker decides to block just one of those edges; let us say edge $(v_{2},x_2,3)$ without losing generality. So,  Traveller goes to  $\overline{x_2}$, and they reach $v_{3}$ at time $8$.  Now,    
%they go to $a_{3}$ via forced edge $(v_{3},a_{3},8)$. Next, Traveller will take one of the $\KmoinsUN$ parallel edges, and Blocker cannot block them all because now it can only block $\KmoinsUN$ edges. Traveller will be able to reach $t$.
%So Blocker cannot block any of these edges $(v_{2},x_2,3)$ and $(v_{2},\overline{x_2},6)$.
%So at time $3$, Traveller is at vertex $v_{2}$, passes through vertex $x_{2}$ (at step $4$) and arrives at $b_{2}$ at step $5$. Blocker must block the two edges $(b_{2},t,5)$.
%Otherwise, Traveller has won. Next, Traveller passes through the vertices $v_{2}$, $\overline{x_{2}}$, $v_{3}$. Moreover, now Blocker can only delete $\KmoinsDEUX$.
Suppose Blocker does not block any of these edges $(v_{2},x_2,3)$ and $(v_{2},\overline{x_2},6)$, Traveller is at vertex $v_2$, passes through vertex  $x_{2}$ (at step $4$) and arrives at $b_{2}$ at step $5$. Blocker must block the two edges $(b_{2},t,5)$.
Otherwise, Traveller has won. Next, Traveller passes through the vertices $v_{2}$, $\overline{x_{2}}$. Blocker must block edge $(\overline{x_2},z_2,\FINmoinsUN)$. Otherwise, Traveller can travel the edge $(\overline{x_2},z_2,\FINmoinsUN)$ and one of the $\KmoinsUN$ parallel edges $(z_i,t,\FIN)$ to reach $t$. 
Traveller at vertex $\overline{x_{2}}$ travels edges $(\overline{x_{2}},v_3,7)$,  $(v_3,a_3,8)$ and one of $\KmoinsUN$ parallel edges $(a_3,t,\FIN)$ to reach $t$.
Therefore, Blocker must block exactly one of two edges $(v_{2},x_2,3)$ and $(v_{2},\overline{x_2},6)$, if they do not want to lose. The other edge will represent the assignment chosen for the universal quantifier of variable $x_2$. Traveller takes the non-blocked edge to reach $x_2$ or $\overline{x_2}$, then follows a forced edge $(x_2,v_3,7)$ or $(\overline{x_2},v_3,7)$ to reach $v_3$ at time $7$.

Now, we can generalise the previous reasoning. Traveller will give an assignment $\sigma$ to variables $x_1,\ldots,x_n$, from 1 to $n$, making each time sure that it can be completed (i.e. each time there is a choice to make for a $\exists x_i$, we know that previous choices ensure that a valid choice exists to finish the assignment). We will see by direct induction that, when {Traveller reaches $v_i$, they have} seen $\frac{i-1}{2}$ blocked edges. The assignment is done as follows:

 \begin{itemize}
 \item  If  $i$ is odd, we have assigned a value for each $x_j$ with $j\le i$. There exists an assignment for $x_i$ (either positive or negative), and Traveller  traverses from $v_{i}$ to $v_{i+1}$, passing through $x_{i}$ if $x_{i}$ is true in $\sigma$, through $\overline{x_{i}}$ otherwise. Thus, Traveller arrives in $v_{i+1}$ at time $\impairTROIS$, and Blocker does not block any edge. Moreover, the assignment can still be completed.
 \item If  $i$ is even, Traveller arrives at vertex $v_{i}$ at time $\pairUN$. With the same reasoning {as} for $i=2$, we know that if Blocker blocks either both or none of the two edges going to $x_i$ and $\overline{x_{i}}$, we have a direct winning strategy. %\footnote{Do we want to state the reasonning again? }
% It aims to reach $v_{i+1}$ while ensuring that Blocker must delete two edges.  With this aim, it traveller traverses  the $v_{i}-v_{i+1}$ path 
%$v_{i} -x_{i} - b_{i} - v_{i} - \overline{x_{i}}  -v_{i+1}$. 
Suppose Blocker deletes one of those edges, {Traveller goes} through the other one and assigns the corresponding assignment to $x_i$. Moreover, {Traveller has} seen one more blocked edge. Hence, {Traveller sees} a blocked edge for every two variables.

  \end{itemize}

By choosing assignments $\sigma$ on $x_i$ (for $i$ odd), adapted depending on the forced choices for the assignments $x_i$ (for $i$ even), we ensure that each clause has at least one true literal.

Traveller arrives at vertex $v_{n+1}$ at time $\FF$, and Blocker has blocked {$\frac{n}{2}$} edges. Subsequently, Traveller reaches vertex $w$ by traversing the forced edge $(v_{n+1},w,\FINmoinsQUATRE)$ and is there at time $\FINmoinsTROIS$. Now, {Blocker} can block at most {$2m$} more edges.

Blocker must block edge $(w,t,\FINmoinsTROIS)$ (otherwise {there is} a direct edge to $t$), and it can no longer prevent Traveller from accessing a vertex among ${C_{1}, \dots, C_{m}}$. Suppose Traveller reaches vertex $C_{j}=(x_{\alpha}\lor x_{\beta}\lor x_{\gamma})$. Since $\sigma$ satisfies all the clauses, one of the variables (by symmetry, say $x_{\alpha}$) evaluates to true by $\sigma$. Since Traveller has already visited vertex $x_{\alpha}$, this implies that the edge $(x_{\alpha},z_{\alpha},\FINmoinsUN)$ has not been blocked.
At this moment, Blocker can block at most $2m-1$ edges. Subsequently, all the parallel edges $(z_{\alpha},t,\FIN)$ cannot be blocked by Blocker anymore because {there are} $2m$ copies to $t$. Thus, Traveller arrives at vertex $t$.\\

 Now, let us consider the case when the formula is not satisfiable. We show how {Blocker} can trap Traveller. Traveller starts from $s$, goes to $v_1$ (at time $1$), then to $x_1$ or $\overline{x_1}$ (at time $2$). Blocker does not block the edge $(x_1,z_1,\FINmoinsUN)$. If Traveller takes this edge, {Blocker blocks} the {$\KmoinsUN$} copies of $(z_1,t,\FIN)$, and Traveller cannot reach $t$. So, Traveller is forced to continue to go to $v_2$ (it arrives at this vertex at time $3$).

As the formula is not satisfiable, whatever the assignment for $x_1$ is, there exists an assignment for $x_2$ that keeps the formula unsatisfiable.
When Traveller is at vertex $v_2$, Blocker blocks one of two edges $(v_2,x_2,3)$ or $(v_2,\overline{x_2},6)$ corresponding to the other assignment of $x_2$. That way, the only choice for Traveller is to go through the edge, making the formula unsatisfiable. 
\begin{itemize}
    \item If that way Traveller reaches $x_2$. Blocker does not block edge $(x_2,z_2,\FINmoinsUN)$. If Traveller goes though edge $(x_2,z_2,\FINmoinsUN)$, Blocker can block all $\KmoinsUN$ parallel edges $(z_2,t,\FIN)$ to trap Traveller. If Traveller goes through edge $(x_2,b_2,4)$, Blocker blocks the two edges going to $t$, and Traveller has no path to reach $t$ anymore.    
    Therefore, Traveller must travel the forced edge $(x_2,v_3,7)$ to reach $v_3$ at time $7$.
    \item If that way Traveller reaches $\overline{x_2}$. Blocker does not block edge $(\overline{x_2},z_2,\FINmoinsUN)$. If Traveller goes though edge $(x_2,z_2,\FINmoinsUN)$, Blocker can block all $\KmoinsUN$ parallel edges $(z_2,t,\FIN)$ to trap Traveller. Therefore, Traveller must travel the forced edge $(\overline{x_2},v_3,7)$ to reach $v_3$ at time $7$.
\end{itemize}

We continue using the same strategy {for Blocker} until Traveller reaches $v_{n+1}$: for each odd $i$, Traveller chooses an assignment, and for each even $i$, Blocker ensures that the choice they force keeps the formula unsatisfiable. When Traveller reaches $v_{n+1}$, {Blocker has blocked} exactly $\frac{n}{2}$ edges.
%In the worst-case scenario, we have blocked $\frac n2$ edges.  

Traveller will reach vertex $w$ using a forced edge $(v_{n+1},w,\FINmoinsQUATRE)$. Next, {Blocker needs} to block edge $(w,t,\FINmoinsTROIS)$; otherwise, Traveller can reach $t$.
Now {Blocker has} $2m-1$ edges that {they} can block.

As the formula is not satisfiable, let $C_j$ be a clause not satisfied by $\sigma$. {Blocker blocks} $2m-2$ edges: $(w,C_{j'},\FINmoinsTROIS)$ with $j' \neq j$. At this moment, {Blocker has} one last edge to block. Traveller is forced to go to $C_j$ and to choose one literal $\ell$ in $C_j$. As $\sigma$ does not satisfy $C_j$, Traveller has not visited $\ell$ when going through the diamonds. So {Blocker blocks} edge $(\ell,z_{\ell},\FINmoinsUN)$, and Traveller is blocked there and cannot reach~$t$.

This completes the proof.

\section{Case $k=1$ for (Li-TCTP)}\label{app:k=1}
\begin{theorem}\label{lem:U-TCTP}
    Locally-informed Temporal Canadian Traveller Problem is solvable in polynomial time when $k=1$.
\end{theorem}

\begin{proof}
Let us define, for a vertex $v\neq t$ and an incident edge $(v,u,\tau,d)$, $\mu(v,e)$ as the latest time Traveller can leave $v$ to reach $t$ at $T$ or before when the edge $e$ is blocked (and no other edge is blocked). We then define $\lambda_1(v)=\min\{\mu(v,e): e=(v,u,\tau,d)\in E\}$, which is the latest time Traveller can leave $v$ to be able to reach $t$ when an edge incident to $v$ is removed.

We then consider the following Dijkstra-like algorithm:
\begin{enumerate}
    \item $S\leftarrow \{t\}$
    \item $\pi_1(t)\leftarrow T$ (with $T=\infty$ if no time bound required to reach $t$)
    \item $\nu_1(v)=\max_{(v,t,\tau,d)\in E}\{\tau: \tau+d\leq \pi_1(t)\}$ for all $v\neq t$ ($-\infty$ if no edge satisfying the condition)
    \item $\pi_1(v)\leftarrow \min\{\lambda_1(v),\nu_1(v)\}$ for all $v\neq t$
    \item while $s\not\in S$:
    \begin{enumerate}
        \item $v^*\leftarrow argmax\{\pi_1(v):v\not\in S\}$
        \item $S\leftarrow S\cup\{v^*\}$
        \item $\nu_1(v)\leftarrow \max\{\nu_1(v),\max_{(v,v^*,\tau,d)\in E}\{\tau: \tau+d\leq \pi_1(v^*)\}\}$ for $v\not\in S$
        \item $\pi_1(v)\leftarrow \min\{\pi_1(v),\nu_1(v)\}$ for all $v\not\in S$.
    \end{enumerate}
\end{enumerate}
Note that once $v^*$ is added to $S$, then $\pi_1(v^*)$ does not change anymore. Note also that, by construction, for $v\not\in S$ we have at each iteration: $$\pi_1(v)= \min\{\lambda_1(v),\max_{u\in S, (v,u,\tau,d)\in E}\{\tau: \tau+d\leq \pi_1(u)\}\}$$
We show by induction that $\pi_1(v^*)$ is the latest time Traveller can leave $v^*$ to be able to reach $t$, i.e., they have a winning strategy if and only if they are at $v^*$ at $\pi_1(v^*)$ (or before).

The hypothesis is valid for $t$. Suppose that it is true for some $S$, and consider one iteration, when $v^*= argmax\{\pi_1(v):v\not\in S\}$ will be added to $S$.

First, assume that Traveller is at $v^*$ at $\pi_1(v^*)$. If an edge $e=(v^*,u,\tau,d)$  is blocked, then Traveller has a winning strategy since $\pi_1(v^*)\leq \lambda_1(v^*)\leq \mu(v^*,e)$. Otherwise, no edge incident to $v^*$ is blocked, and then $v^*$ uses an edge $(v,u,\tau,d)$ with $u\in S$, $\tau+d\leq \pi_1(u)$, and  $\pi_1(v^*)\leq \tau$. Then Traveller is at $u$ at time $\tau+d\leq \pi_1(u)$. By induction hypothesis, there is a winning strategy for them from $u$. 

Suppose that Traveller is at $v^*$ at $\tau^*>\pi_1(v^*)$. If $\pi_1(v^*)=\lambda_1(v^*)$, then there exists $e=(v,u,\tau,d)$ such that $\lambda_1(v^*)=\mu(v,e)=\pi_1(v^*)<\tau^*$. Blocker blocks this edge, and by definition of $\mu(v,e)$ Traveller cannot reach $t$ on time. Otherwise, Traveller follows a walk. While it stays at vertices $v$ outside $S$, Blocker acts similarly: if $\pi_1(v)=\lambda_1(v)$, Blocker blocks an edge incident to $v$ which prevents Traveller to reach $t$ on time: indeed, they are at $v$ not before $\tau^*$, and $\tau^*>\pi_1(v^*)\geq \pi_1(v)=\lambda_1(v)$ (so such an edge exists). 

Otherwise, at some point, Traveller will go inside $S$. Let us consider the first time they enter $S$, using an edge $(v,u,\tau,d)$ from $v\not\in S$ to $u\in S$. Then, we know that $\pi_1(v)\neq \lambda_1(v)$, so $\pi_1(v)= \max_{u'\in S, (v,u',\tau',d)\in E}\{\tau: \tau+d'\leq \pi_1(u')\}$. In particular, as Traveller is at $v$ at $\tau\geq \tau^*$ with $\tau^*>\pi_1(v^*)\geq \pi_1(v)$, when using the edge $(u,v,\tau,d)$ they reach $u$ strictly after $\pi_1(u)$. By induction hypothesis, Traveller has no winning strategy.  

\end{proof}

\section{Remaining part of the proof of Theorem~\ref{th:static}}\label{app:statichardness}

Now, let us first consider that the formula is satisfiable, and let $\sigma$ be an assignment satisfying the formula. We consider the following strategy for Traveller: first, they go from $s=v_0$ to $v_n$ with the path $(v_i,x_i,v_{i+1})$ if $x_i$ is true in $\sigma$ ($(v_i,\overline{x_i},v_{i+1})$ if $x_i$ is false).

Traveller reaches $v_n$ at time $2n$. At $v_n$, if Traveller has seen at least one blocked edge (i.e., either a blocked edge $\EDGE{x_i}{z_i}$, or $\EDGE{\overline{x_i}}{\overline{z_i}}$, or $\EDGE{v_n}{c_1}$), they go to $w$ and then to $t$ (there are four copies of $\EDGE{w}{t}$, and at most three further edges can be blocked), reaching $t$ at $2n+27M+2m=T$.

Otherwise, Traveller has not seen any blocked edge so far. They go to $c_1$ and continue their journey: at $c_j$, Traveller chooses a literal $\ell_j^s$ of $C_j$ satisfied by $\sigma$. It goes to $\alpha_j^s$. If at least one of the three copies of $\EDGE{\alpha_j^s}{c_{j+1}}$ is not blocked, then Traveller goes to $c_{j+1}$ and continues. If this happens for all clauses, Traveller reaches $c_{m+1}$ at time $2n+9M+2m$, and then $t$ at $2n+2m+27M=T$. 

Otherwise, all three edges $\EDGE{\alpha_j^s}{c_{j+1}}$ are blocked for some $j$. Then, at most, one further edge can be blocked. Traveller is at $\alpha_j^s$ at time $2n+9M+2j-1$. They go to $\beta_j^s$, then to the literal $x_i$ or $\overline{x_i}$ corresponding to $\ell_j^s$. They are here at time $2n+2m+17M$. Traveller knows that the edge $\EDGE{x_i}{z_i}$ (or $\EDGE{\overline{x_i}}{\overline{z_i}}$) is not blocked, as they have visited $x_i$ (or $\overline{x_i}$) when visiting vertex-gadgets. So Traveller goes to $z_i$ (or $\overline{z_i}$). Since one more edge can be blocked at most, Traveller can reach $t$, at time $2n+2m+27M=T$.\\

Now, suppose that the formula is not satisfiable. To reach $t$, Traveller must use at some point an edge among $\EDGE{x_i}{z_i}$, $\EDGE{\overline{x_i}}{\overline{z_i}}$, $\EDGE{x_i}{\beta_j^s}$ (or $\EDGE{\overline{x_i}}{\beta_j^s}$), $\EDGE{v_n}{c_1}$ or $\EDGE{v_n}{w}$.

{Blocker does} not block any edge until Traveller travels one of those edges. We look at the first time Traveller takes one of those edges.
\begin{itemize}
    \item If it is an edge $\EDGE{x_i}{z_i}$ or $\EDGE{\overline{x_i}}{\overline{z_i}}$, {Blocker blocks} the two edges to $t$. Then Traveller has to go back, with already at least $20M$. They need at least $10M$ to reach $t$ (no shorter path), so they arrive at $t$ at $\geq 30M >T$.
    \item If it is edge $\EDGE{v_n}{w}$, {Blocker blocks} the $4$ edges $\EDGE{w}{t}$, and Traveller has to go back and has already travelled $54M>T$.
    \item If it is an edge going to $\beta_j^s$, {Blocker blocks} the 2 edges $\EDGE{\alpha_j^s}{\beta_j^s}$. Traveller has to go back, having already travelled $16M$. The unique way to reach $t$ before $T$ is to go at some point to some $z_i$ or $\overline{z_i}$. {Blocker still has} 2 edges to block, so {they} block the two edges to $t$, and Traveller has to go back, having travelled at least $36M>T$.
    \item The only remaining case is that Traveller first uses $\EDGE{v_n}{c_1}$. Note that Traveller cannot use this edge again: indeed, they would have travelled at least $18M$ and would need at least $10M$ to reach $t$, thus at least $28M>T$. Traveller cannot be at $c_1$ before $2n+9M$. Note that the shortest path between $c_1$ and $t$ is $2m+18M$: either going to $c_{m+1}$ with a path of length $2m$ and then $t$, or to go to some $\alpha_j^s$ (length $2j-1$), then $\beta_j^s$, $x_i$ or $\overline{x_i}$, $z_i$ or $\overline{z_i}$ and $t$.
    
If Traveller is at $c_1$ strictly after $2n+9M$, they cannot reach $t$ by $T$. So Traveller must be at $c_1$ at time $2n+9M$. They took a shortest path from $s$ to $v_n$, visiting exactly one vertex among $x_i$ and $\overline{x_i}$, thus describing a truth value $\sigma$. To reach $t$ on time, Traveller must traverse some clause gadgets (using a shortest path). While they go to $\alpha_j^s$ where $\ell_j^s$ is true in $\sigma$, {Blocker does} not block any edge: if Traveller goes to $\beta_j^s$, they must go then to the corresponding $x_i$ or $\overline{x_i}$, then directly to $z_i$ or $\overline{z_i}$. {Blocker blocks} the two edges to $t$, and Traveller cannot reach $t$ on time.

     Then, Traveller must follow a (shortest) path on the clause-gadget. As the formula is not satisfied by $\sigma$, for some (unsatisfied) clause, Traveller will go to some $\ell_{j}^s$ corresponding to a false literal in $\sigma$. 
    {Blocker blocks} the $3$ edges $\EDGE{\ell_j^s}{c_{j+1}}$. Traveller must use a shortest path to $t$ (to be on time), so they must go to $\beta_j^s$, then to the false literal $x_i$ or $\overline{x_i}$. This vertex has not been visited yet by Traveller, so {Blocker} can block $\EDGE{x_i}{z_i}$ (or $\EDGE{\overline{x_i}}{\overline{z_i}}$). Traveller must then spend at least $2+10M$ to reach $t$, being late on $t$. 
\end{itemize}

\section{Proof of Theorem~\ref{th:utctp2}}\label{app:utctp2}

We need to make two kinds of changes to the graph used for Theorem~\ref{th:static}: the multiplicity of some edges and their time.

We choose them increasingly to ensure that any path allowed for construction can be taken. The multiplicity is adapted to $2$ edges. More precisely, the modifications are:
\begin{itemize}
    \item We put time $2i+1$ on edges $\EDGE{v_{i}}{x_{i+1}}$ and $\EDGE{v_{i}}{\overline{x_{i+1}}}$.  We put time $2i+2$ on edges $\EDGE{x_{i+1}}{v_{i+1}}$ and $\EDGE{\overline{x_{i+1}}}{v_{i+1}}$. This ensures that to go to $v_n$,  Traveller can and must see exactly one assignment per variable.
    \item $\EDGE{v_n}{w}$ gets time $2n+1$, and $\EDGE{w}{t}$ gets time $2n+2$. We put only two copies of $\EDGE{w}{t}$,  ensuring that this path to the target can be taken if and only if an edge is missing while visiting the variable gadgets.
    \item $\EDGE{v_n}{c_1}$ gets time $2n+1$.
    \item We put time $2n+2j$ on edges $\EDGE{c_j}{\alpha^s_j}$ and time $2n+2j+1$ on edges $\EDGE{\alpha^s_j}{c_{j+1}}$. We put a single copy of $\EDGE{\alpha^s_j}{c_{j+1}}$ instead of $3$. That way, Blocker only needs to block one edge to force Traveller to visit a variable they did not see earlier.
    \item Edge $\EDGE{c_{m+1}}{t}$ has time $L=2n+2m+2$, ensuring the target can be reached after seeing all clauses.
    \item Edge $\EDGE{\alpha^s_j}{\beta^s_j}$ is now forced and has time $L$. Edges $\EDGE{\beta^s_j}{ x_i}$ and $\EDGE{\beta^s_j}{ \overline{x_i}}$ get time $L+1$.
    \item Edges $\EDGE{x_i}{z_i}$ and $\EDGE{\overline{x_i}}{\overline{z_i}}$ get time $L+2$. Edges $\EDGE{\overline{z_i}}{t}$ get time $L+2$ (and we keep two copies). That way, Traveller cannot try to visit them before seeing a missing edge. The only reason Traveller goes to some $z_i$ or $\overline{z_i}$ is to have been blocked at some clause before.
\end{itemize}

When the formula is satisfiable, Traveller uses the same strategy as in the proof of Theorem~\ref{th:static}. Otherwise, Traveller must choose a single assignment as before. When they reach a literal of a clause not satisfied (i.e. visit some vertex $\alpha^s_j$), Blocker blocks the access to the next clause. They must reach the corresponding $x_i$ (resp. $\overline{x_i}$) and Blocker blocks the access to $z_i$ (resp. $\overline{z_i}$), leaving Traveller stuck.

\end{document}